\definecolor{subsectioncolor}{rgb}{0,0,0}
\newtheorem{theorem}{Theorem}
\newtheorem{remark}{Remark}
\newtheorem{lem}{Lemma}
\newtheorem{definition}{Definition}
\newtheorem{assum}{Assumption}
\begin{document}
%

\twocolumn[  
    \begin{@twocolumnfalse}
        \begin{center}

            \huge Bilateral Peer-to-Peer Energy Trading via Coalitional Games\\
            \vspace{5mm}
            \large Aitazaz Ali Raja and Sergio Grammatico

         \end{center}
     \end{@twocolumnfalse}
]

%


\vspace{15mm}
\begin{abstract}
In this paper, we propose a bilateral peer-to-peer (P2P) energy trading scheme under single-contract and multi-contract market setups, both as an assignment game, a special class of coalitional games.  {The proposed market formulation allows for efficient computation of a market equilibrium while keeping the desired economic properties offered by the coalitional games. Furthermore, our market model allows buyers to have heterogeneous preferences (product differentiation) over the energy sellers,  which can be economic, social, or environmental. To address the problem of scalability in coalitional games, we design a novel distributed negotiation mechanism that utilizes the geometric structure of the equilibrium solution to improve the convergence speed. Our algorithm enables market participants (prosumers) to reach a consensus on a set of ``stable" and ``fair" bilateral contracts which encourages prosumer participation.} The negotiation process is executed with virtually minimal information requirements on a time-varying communication network that in turn preserves privacy. We use operator-theoretic tools to rigorously prove its convergence.  {Numerical simulations illustrate the benefits of our negotiation protocol and show that the average execution time of a negotiation step is much faster than the benchmark.}

\end{abstract}


\let\thefootnote\relax\footnotetext{Aitazaz Ali Raja and Sergio Grammatico are with Delft Center for Systems and Control, TU Delft, The Netherlands. (e-mail addresses: a.a.raja@tudelft.nl; s.grammatico@tudelft.nl). This work was partially supported by NWO under research project P2P-TALES (grant n. 647.003.003) and the ERC under research project COSMOS, (802348).}

\vspace{-0.1mm}
\section{Introduction}

\IEEEPARstart{M}{odernization} of power systems is rapidly materializing under the smart grid framework. A major part of this transformation is taking place on the consumer side, due to the increasing penetration of distributed energy resources (DER) along with the deployment of communication and control technologies. 
These technologies  enable consumers to have an active interaction with the grid by an informed control over their energy behavior, thus they are referred as ``prosumers". \\
To realize their full potential, prosumers should engage more actively with energy markets. Currently the direct participation of prosumers in the whole sale energy market is technically and economically non-viable. Hence, small-scale prosumers interact with aggregating entities such as retailers to deliver their excess energy to the grid \cite{morstyn2018using}. Retailers  usually offer a considerably lower price for the energy sold by prosumers, e.g. feed-in-tariff (FiT), compared to the buying price that they charge \cite{han2018incentivizing}. 
To ensure an economically appealing role of prosumers, peer-to-peer (P2P) energy trading represents a disruptive demand side energy management strategy \cite{Tushar2020}. In fact, P2P markets enable prosumers to locally exchange energy on their own terms of transactions. This direct control over trading allows prosumers to make profitable interactions, thus it encourages wider participation \cite{tushar2018peer}. Furthermore, such a local exchange of energy at the demand side also provides significant benefits to the system operators for example in terms of peak shaving \cite{moret2018energy},
lower investments in grid capacity 
and improvement in overall system reliability \cite{morstyn2018using}. \\
However, there are strong mathematical challenges in designing a comprehensive P2P energy market mechanism which seeks a market equilibrium while incorporating a self-interested decision-making attitude by the participants \cite{tushar2018transforming}. Despite its mathematical sophistication, the mechanisms need to be easily interpretable for the participation of laypersons, e.g. residential prosumers. Along this direction, researchers have recently presented several interesting formulations. In the literature, P2P energy markets are proposed under various architectures that can be broadly categorized as centralized markets (community-based trading), decentralized markets (bilateral trading) and combinations there of. The features of centralized market architecture include an indirect interaction of market participants via assisting platforms, no negotiatory role for market participants and single market wide energy trading price, evaluated centrally. Among others, in \cite{moret2018energy} Moret and Pinson present a centralized local energy market where groups of prosumers (energy collectives) interact with each other and with the system operator via a community manager to make energy exchanges. In \cite{morstyn2018multiclass}, Morstyn and McCulloch treat energy as a heterogeneous product that can be differentiated based on the attributes of its source. Centralization is achieved by a platform agent that is supposed to maximize social welfare by setting prices and that enables energy exchanges among prosumers and with the wholesale electricity market. In \cite{Crespo-Vazquez2020}, Vazquez, Al-Skaif and Rueda present a community based market that models the decision making into three sequential steps, solved using distributed optimization, where the energy is exchanged via a local pool at single clearing price. \\
Decentralized P2P markets can allow for direct buyer-seller (bilateral) interaction with possibly different energy trade price for each bilateral contract. In \cite{Sorin2019}, Sorin, Bobo and Pinson formulate a decentralized P2P market architecture based on a multi-bilateral economic dispatch with a possibility of product differentiation, where the solution is obtained by solving a distributed optimization problem. Another decentralized P2P market is presented by Morstyn, Teytelboym and McCulloch in \cite{Morstyn2019}, which is formulated as a matching market that seeks a stable bilateral contract network. In both works, prosumers are allowed to make bilateral contracts, i.e., each energy transaction can take place at a different price.
In \cite{nguyen2020optimal}, Nguyen also presents a decentralized P2P market with a clearing mechanism based on the alternating direction method of multipliers.\\
{Within the industrial informatics community, P2P energy platforms have received strong research attention under both centralized and decentralized architectures. Recently, in \cite{ullah2021two}, the authors propose a two-tier market corresponding to inter and intra-region interactions where a DSO acts as a representative for each region and the price of energy trade between regions is evaluated centrally. In \cite{cui2019efficient}, the authors also present a two-level market for trading energy with and within energy communities. Each community is represented by an aggregator that decides inter-community trading price whereas intra-community trading is done at a fixed price. Both the works propose distributed optimization based market solutions and don't allow for bilateral economic interaction on prosumer (peer) level. A decentralized P2P market is analysed in \cite{alskaif2021blockchain} with the possibility of a bilateral trade however, their focus is on the trading preferences of prosumers rather than the mechanism design. The blockchain based implementations to address privacy and security in P2P platforms are also addressed \cite{pradhan2021flexible}.
Most of the works reviewed above lack any discussion on the economic properties of the proposed trading strategies, which are critical for the practicality of any market mechanism. Next, we build a case for our proposal of a P2P market mechanism that allows for a decentralized (fully P2P) interaction and also ensures desirable economic properties.} \\ 
{\textit{Market design}: The key desirable properties of electricity market design are market efficiency, incentive compatibility, cost recovery and revenue adequacy \cite{schweppe2013spot}. Unfortunately, by Hurwicz's impossibility theorem, 
no market mechanism can satisfy all four properties simultaneously and a trade‐off has to be found. Centralized electricity markets are usually cleared based on the locational marginal pricing, which only satisfies cost recovery and revenue adequacy. Similarly, the VCG mechanism, utilized in several P2P market designs satisfies market efficiency, incentive compatibility, and cost recovery making financial deficit possible for the market operator. In our context of P2P market design where the participants are relatively small prosumers, hence are not generally capable of exercising market power, we can reasonably assume them to be truthful, thus enforcing incentive compatibility. The other market properties can be satisfied by the solutions of canonical coalitional games, thus they provide the required mathematical foundation for the design of P2P markets. We refer the reader to \cite{Saad2009} for details about the classes (canonical, coalition formation, and coalitional graph) of coalitional games. In the general setup of coalitional games, the solutions with the required properties, such as the core, suffer the issue of scalability as all possible sub-coalitions, i.e, $2^N - 1$ for $N$ agents, must be considered. Based on these considerations, here we model the P2P electricity market as an assignment game, a special class of coalitional games, for which the solution requires the information about coalition pairs only. This solves the scalability issue while keeping the desired market properties \cite{shapley1971assignment}. Furthermore, our model also allows for \textit{bilateral} interactions where buyers can exercise their preferences over the sellers, as well as their energy sources which, in our opinion,  captures the true spirit of P2P trading. Specifically, in this paper, we propose an easily interpretable decentralized (bilateral) P2P energy market that allows for a heterogeneous treatment of energy by utilizing concepts from coalitional game theory for mechanism design and for proving the plausibility of the equilibrium solution.}\\
 {Coalitional game theory provides rigorous analytical tools for the cooperative interactions among agents with selfish interests, and in fact it has received a strong attention from smart-grid researchers recently.} For instance, the authors in \cite{tushar2018peer} propose a P2P energy trading scheme in which prosumers form a coalition to trade energy among themselves at a (centralized) mid-market rate which in turn ensures the stability of the coalition  \cite{tushar2019motivational}.  
In \cite{tushar2019grid}, the authors formulate a coalition formation game for P2P energy exchange among prosumers, and the resulting coalition structure is shown to be stable.
The price of exchange is determined by a central auctioneer based on a double auction mechanism. Another coalition formation game is presented in \cite{tushar2020coalition}, which allows prosumers to optimize their battery usage for P2P energy trading. The outcome is shown to be stable, optimal and prosumer-centric. The authors in \cite{luo2018distributed} propose decentralized bilateral negotiation among prosumers for energy exchange via coalition formation, but without considering coalitional-game-theoretic stability.  {Coalition formation games have also been utilized for cooperative charging of electric vehicles (EVs). 
The authors in \cite{yu2014phev} consider EVs in different regions with different discharging prices. Based on this difference, EVs form a coalition structure to exchange energy. Then in \cite{wang2021blockchain}, the authors formulate a coalition formation game among private charging piles to optimally provide charging services to EVs by sharing their resources. In both works the considered games are non-super-additive, thus they employee a merge-and-split protocol to reach a stable coalition structure. The protocol does not determine the price of exchange but only a stable match. In Table \ref{tab: literature}, we provide a comparison between the features of our P2P market model and those of the most relevant literature. We note that in Table \ref{tab: literature}, stability is considered in the context of coalitional game theory and we mark the presence of guarantees on the market properties only if they are explicitly discussed in the paper.} 

To the best of our knowledge, the literature on coalitional game theoretic formulation of P2P markets lacks a development of bilateral P2P model via most widely studied and easily interpretable class of coalitional games, i.e., canonical coalitional games. Along with the mathematical rigor provided by the game formulation, its straight forward interpretation is also an important feature for a P2P market design, intended to encourage the participation of small prosumers with low technical knowledge.  {Though canonical games seem the most natural approach to model bilateral market, the hindrance in its adoption comes from high computational complexity and coalition stabilizing contract prices might not exist, i.e., the core set might be empty.} To address these, here, we model P2P energy trading as a canonical coalitional game that allows for bilateral energy trading contracts and guarantees the existence of \textit{stable} contract prices that represent a competitive equilibrium of the market. Furthermore, the negotiation mechanism enables market participants for an efficient and convenient settlement on the \textit{stable} and \textit{fair} contract prices.\\
\begin{table}[]
 \centering
\caption{Comparison with the state-of-the-art employing coalitional game theory}
\label{tab: literature}
\begin{tabular}{lccccc} \toprule
\textbf{Features} & \cite{tushar2018peer} & \cite{tushar2019grid} & \cite{tushar2020coalition}  & \cite{luo2018distributed} & This paper \\ \midrule
Bilateral contracts&     $\times$ & $\times$   & $\times$ & \checkmark & \checkmark\\ \midrule
Guarantees on\\ market properties&        \checkmark  & $\times$   & $\times$ & $\times$   & \checkmark\\ \midrule
Product differentiation& $\times$ &  $\times$ & $\times$ & $\times$   & \checkmark \\ \midrule
Distributed computation& $\times$ &  $\times$ & $\times$ & \checkmark   & \checkmark \\ \midrule
Stability of market& \checkmark & \checkmark & \checkmark & $\times$   &\checkmark\\ \bottomrule
 \end{tabular}
\vspace{-0.1mm}
 \end{table}
\textit{Contribution}: 
Our key contributions are summarized next:
\vspace{-0.1mm}
\begin{itemize}
    \item We formulate P2P energy trading as a bilateral assignment game (coalitional game), which is easily interpretable and allows for product differentiation to accommodate the heterogeneous preferences of buyers. This novel formulation ensures the existence of a ``stable" set of bilateral contracts that is an equilibrium (Section \ref{sec: Peer-to-Peer market as coalitional game}). Furthermore, our market formulation ensures the desirable economic properties of the mechanism, which are market efficiency, cost recovery, and revenue adequacy;
    \item We develop \textit{single-contract} and \textit{multi-contract} setups of bilateral P2P energy market with different computational burdens and features (Section \ref{subsec: single-contract} and \ref{subsec: multi contract});
     \item We develop a novel distributed negotiation mechanism presented as a fixed-point iteration where buyers-sellers communicate locally over a possibly time-varying communication network. We exploit the geometrical structure of the \textit{core} solution together with operator theory to formulate our algorithm via linear operations, thus considerably reducing the computational complexity of the negotiation, strongly improving over \cite{raja2021}, \cite{bauso2015distributed}. We show that the mechanism converges to a payoff allocation in the core of the assignment game (Section \ref{sec: solution mechanism});
    \item  We present our algorithm in a generalized form which enables fast convergence and allows participants to negotiate for ``fair" contracts in the interior of the core set \cite{shapley1971assignment}. The level of information requirement in our mechanism preserves privacy among the market participants. Furthermore, our algorithm is based on consensus protocols, which are easier to analyze and embed on real hardware, instead of dual variables (e.g. in \cite{nguyen2020optimal}), to reach a common price vector among the participants.
\end{itemize}
 {\noindent \textit{Notation and definitions}: Given a mapping $M: \mathbb{R}^n \rightarrow \mathbb{R}^n, \mathrm{fix}(M):= \{x \in \mathbb{R}^n \mid x = M(x)\} $ denotes the set of its fixed points. $\text{Id}$ denotes the identity operator. For a closed set \(C \subseteq \mathbb{R}^{n},\) the mapping $\mathrm{proj}_C$: \(\mathbb{R}^{n} \rightarrow C\) denotes the projection onto \(C,\) i.e., \(\operatorname{proj}_C(x)=\) \(\arg \min _{y \in C}\|y-x\| .\) An over-projection operator is denoted by $\mathrm{overproj}_C : = 2\mathrm{proj}_C - \text{Id} $. For a set $S$ the power set is denoted by $2^S$. \(A \otimes B\) denotes the Kronecker product between the matrices \(A\) and \(B .\) $I_N$ denotes an identity matrix of dimension $N \times N$. {For $x_{1}, \ldots, x_{N} \in \mathbb{R}^{n},$ $\mathrm{col}(\left(x_{i}\right)_{i \in(1, \ldots, N)}):=\left[x_{1}^{\top}, \ldots, x_{N}^{\top}\right]^{\top}.$} 
$\mathrm{dist}(x,C)$ denotes the distance of $x$ from a closed set \(C \subseteq \mathbb{R}^{n},\) i.e., $\mathrm{dist}(x,C):= \mathrm{inf}_{y \in C} \|y-x\|$. {For a closed set \(C \subseteq \mathbb{R}^{n}\) and $N \in \mathbb{N}, C^N:=\prod_{i=1}^{N} C_{i}$}. A continuous mapping $M : \mathbb{R}^n \rightarrow \mathbb{R}^n$ is a paracontraction, with respect to a norm $\|\cdot\|$ on $\mathbb{R}^n$, if $ \|M(x) - y\| < \|x - y\|,$ for all $(x,y) \in (\mathbb{R}^n \backslash \mathrm{fix}(M)) \times \mathrm{fix}(M)$.}
\vspace{-0.1mm}
\section{Background on Coalitional Games and Assignment Games}\label{sec: background on CG}
\vspace{-0.1mm}
\subsection{Coalitional games}\label{subsec: CG}
Let us first provide some mathematical background on coalitional game theory and then describe assignment games, a special class of coalitional games. A coalitional game consists of a set of agents, indexed by $\mathcal{I} = \{1, \ldots, N\}$, who cooperate to receive a higher individual return compared to that due to non-cooperative actions. The utility generated by this cooperation is defined by a value function $v$.
\begin{definition} [Coalitional game]
Let $\mathcal{I} = \{1, \ldots, N\}$ be a set of agents. A coalitional game is a pair $\mathcal{G} = (\mathcal{I}, v)$ where $v: 2^\mathcal{I} \to \mathbb{R}$ is a value function that assigns a real value, $v(S)$, to each coalition $S \subseteq \mathcal{I}$. $v(\mathcal{I})$ is the value of so-called grand coalition. By convention, $v(\varnothing) = 0$. $\hfill \square$
\end{definition} 
In a coalitional game, the value generated by a coalition $S$, $v(S)$, is distributed among the members of $S$ as a payoff. For each $i \in S$, the element $x_i$ of a payoff vector $\boldsymbol{x} \in \mathbb{R}^{|S|}$ represents the share of agent $i$ of the value $v  (S)$.
 {For a game with a grand coalition $\mathcal{I}$
 we assume that each agent $i  \in \mathcal{I}$ is rational and demands an efficient payoff vector.} Mathematically, this means that the payoff vector proposed by each agent must belong to its \textit{bounding set}.
\begin{definition}[Bounding set]\label{def: bounding set}
For a coalitional game $\mathcal{G}=(\mathcal{I}, v)$, the set
\begin{equation} \label{eq: bounding set}
\begin{array}{ll}
\vspace{-0.1mm}
\mathcal{X}_i :=    \bigg\{x \in \mathbb{R}^N \mid & \sum_{j \in \mathcal{I}} x_j = v (\mathcal{I}),\\
  &  \sum_{j \in S} x_j \geq v  (S), \forall S \subset \mathcal{I} \text{ s.t. } i \in S \bigg\}
\end{array}
\end{equation}
denotes the bounding set of an agent $i \in S$. $\hfill \square$
\end{definition}
We note from (\ref{eq: bounding set}) that Bounding half space is closed and convex, a polytope with special geometry, thus we can represent the bounding set as the intersection of \textit{bounding half-spaces}.

\begin{definition} [Bounding half-spaces]\label{def: hyperplanes}
For a coalitional game $(\mathcal{I},v)$ and a coalition $S \subset \mathcal{I}$ the bounding half-space is a set $  H(S) := \textstyle \{ x \in \mathbb{R}^N \mid \sum_{i \in S} x_i \geq v(S) \}. $
Moreover, let the set
\begin{equation}\label{eq: hyperplanes}
\mathcal{H}_i :=  \{ H(S) \mid S \subset \mathcal{I}, i \in S  \}
\end{equation} denote the set of all bounding half-spaces corresponding to the set of rational and efficient payoffs for an agent $i$, i.e., the bounding set $\mathcal{X}_i$ in (\ref{eq: bounding set}). $\hfill \square$
\end{definition}
Now, using half-spaces as in Definition \ref{def: hyperplanes}, we can write the bounding set as $\mathcal{X}_i = \bigcap_{S \subseteq \mathcal{I}| i \in S} H(S) $.\\
Since a rational agent $i$ agrees only on a payoff in its bounding set $\mathcal{X}_i$ thus, a mutually agreed payoff shall belong to the intersection of the bounding sets of all the agents. Interestingly, this intersection corresponds to the core, the solution concept that relates to the 
stability of a grand coalition \cite{Saad2009}.
\begin{definition} (Core):
The core $\mathcal{C}$ of a coalitional game ($\mathcal{I},v$) is the following set of payoff vectors:
\begin{equation} \label{core}
  \mathcal{C} :=  \{ x \in \mathbb{R}^N \mid \sum_{i \in \mathcal{I}} x_i = v(\mathcal{I}),\sum_{i \in S} x_i \geq v(S), \forall S \subseteq \mathcal{I}  \},  
\end{equation}
where, the term $\sum_{i \in \mathcal{I}}x_i = v(\mathcal{I})$ ensures the efficiency and $\sum_{i \in S}x_i \geq v(S)$ shows the rationality of  a payoff. $\hfill \square$
\end{definition}
 {We note from (\ref{core}) that the core set $\mathcal{C}$ is closed and convex, a polytope with special geometry. These facts allow us to represent the core by using the bounding sets, as in Definition \ref{def: bounding set}: We can in fact write the core as $\mathcal{C} = \bigcap_{i=1}^N \mathcal{X}_i$.}\\
In the sequel, we deal with the grand coalition only, therefore we use the core $\mathcal{C}$ as the solution concept. Next, we give some background on a subclass of transferable utility (TU) coalitional games called assignment games.
\vspace{-0.1mm}
\subsection{Assignment games}\label{subsec: assignment game}
An assignment game models a bilateral one-to-one matching market with the primary objective of finding optimal assignments between the two sides, for example, matching buyers to sellers \cite{shapley1971assignment}. Thus,  let us refer to the sets of agents on the two sides of the market as buyers and sellers and denote them by $\mathcal{I}_\mathcal{B}$ and $\mathcal{I}_\mathcal{S}$, respectively. Here, each seller $j \in \mathcal{I}_\mathcal{S}$ owns a good for which declares the value of at least $c_j$; whereas, for each buyer $i \in \mathcal{I}_\mathcal{B}$, the ceiling worth of the good of seller $j$ is $h_{i,j}$.  Then, the value function that gives value to a simplest \textit{meaningful} coalition, i.e., a buyer-seller pair, reads as:
\begin{equation}\label{eq: value function}
  v(i,j) = \max\{0,h_{i,j} - c_j\}. 
\end{equation}
Here, with a slight abuse of notation, we refer to $v(\{i,j\})$ by $v(i,j)$.
We note that any assignment which is favorable to both parties  must satisfy $h_{i,j} > c_i$. Furthermore,  one-sided coalitions generate no value, i.e., $v(S) = 0$ if $S \subseteq \mathcal{I}_\mathcal{B}$ or $S \subseteq \mathcal{I}_\mathcal{S}$, thus only mixed coalitions are meaningful.\\
 Interestingly, the buyer-seller pairs alone suffice to determine the market completely. Using this observation, we define an assignment matrix $M = [v(i,j)]$ for all pairs $(i,j) \in \mathcal{I}_\mathcal{B}\times \mathcal{I}_\mathcal{S}$.

\begin{definition} [Value function]\label{def: value function}
Let $\mathcal{I}_\mathcal{B} = \{1, \ldots, N_\mathcal{B}\}$ and $\mathcal{I}_\mathcal{S}= \{1, \ldots, N_\mathcal{S}\}$ be the sets of buyers and sellers, respectively. Let $M = [v(i,j)]_{(i,j) \in \mathcal{I}_\mathcal{B}\times \mathcal{I}_\mathcal{S}}$ be an assignment matrix with $v(i,j)$ as in (\ref{eq: value function}). Given $\mathcal{B} \subseteq \mathcal{I}_\mathcal{B}$ and $\mathcal{S} \subseteq \mathcal{I}_\mathcal{S}$, let $\mathcal{P}(\mathcal{B}, \mathcal{S})$ be the set of all possible matching configurations between $\mathcal{B}$ and $\mathcal{S}$, where a matching configuration is a set of two-sided matchings such that a seller (buyer) is matched with at most one buyer (seller). Then, the value function $v_{M}: \mathcal{I}_\mathcal{B} \cup \mathcal{I}_\mathcal{S} \to \mathbb{R}$ is defined as, $v_{M}(\mathcal{B} \cup \mathcal{S})=\underset{P \in \mathcal{P}(\mathcal{B}, \mathcal{S})}{\max} \sum_{(i,j) \in P} v(i,j).$ $\hfill \square$
\end{definition} 
Let us now formally define an assignment game.
 \begin{definition} [Assignment game]\label{def: assignment game}
Let $\mathcal{I}_\mathcal{B} = \{1, \ldots, N_\mathcal{B}\}$ and $\mathcal{I}_\mathcal{S}= \{1, \ldots, N_\mathcal{S}\}$ be the sets of buyers and sellers, respectively.  An assignment  game is a pair $\mathcal{M} = (\mathcal{I}_\mathcal{B} \cup \mathcal{I}_\mathcal{S}, v_M)$, where the value function $v_M$ is as in Definition \ref{def: value function}. 
 $\hfill \square$
\end{definition} 
Next, we reformulate the core in (\ref{core}) for assignment games.
\begin{definition} [Core of assignment game]
The core $\mathcal{C}_M$ of an assignment game $(\mathcal{I}_\mathcal{B} \cup \mathcal{I}_\mathcal{S}, v_M)$ is the following set:
\begin{align}\label{core assignment}
 & \mathcal{C}_M :=  \{ (x', x'') \in \mathbb{R}^{N_\mathcal{B}} \times \mathbb{R}^{N_\mathcal{S}} \mid \textstyle \sum_{i \in \mathcal{I}_\mathcal{B}} x_i' + \sum_{j \in \mathcal{I}_\mathcal{S}} x''_j = \nonumber\\
  & \qquad v_M(\mathcal{I}_\mathcal{B} \cup \mathcal{I}_\mathcal{S}), x_i' + x''_j \geq v(i,j) \text{ for all } (i,j) \in \mathcal{I}_\mathcal{B} \times \mathcal{I}_\mathcal{S}  \}. 
\end{align}
\end{definition}

\begin{remark}[Non-emptiness of core {\cite{shapley1971assignment}}]\label{lem: non empty core}
An assignment game (as in Definition \ref{def: assignment game}) has a non-empty core. $\hfill \square$
\end{remark}
We note that the core of an assignment game is defined by two sided pair coalitions instead of all possible coalitions in (\ref{core}), which considerably reduces the complexity of solving an assignment game. Thus, an assignment game  presents a more practical approach, compared to the general coalitional game theory, towards formulating a bilateral P2P market.  \\
For an optimally matched pair $(i,j) \in \mathcal{I}_\mathcal{B} \times \mathcal{I}_\mathcal{S}$, the payoff $(x'_i, x''_j)$ determines the contract price $\lambda_{i,j}$. In a bilateral trade, buyer $i$ pays to seller $j$ the difference of the price they initially offered and his payoff, i.e.,  $\lambda_{i,j} = h_{i,j} - x'_i$. For brevity, in the sequel, we use the collective payoff vector for buyers and sellers, i.e.,  $x = \mathrm{col}(x', x'')$, where $x' \in \mathcal{I}_\mathcal{B}$ and $x'' \in \mathcal{I}_\mathcal{S}$.\\
We remark that, for P2P markets modelled as coalitional games it is not plausible to adopt centralized methods for computation of a payoff in the core because the  core  set  is  not  singleton and different core payoffs can favor different sides (buyers or sellers) of the market. In fact, each core set has a buyer optimal and a seller optimal point as the two extremes \cite{shapley1971assignment}. Thus, it raises the possibility of biased behavior of the central operator which may jeopardize the confidence of market participants. Furthermore, in practice, bilateral agreements should be directly negotiated by the self-interested agents. Thus, in Section \ref{sec: solution mechanism} we propose a distributed solution mechanism in which the agents negotiate to autonomously reach a mutual agreement, i.e., consensus on the payoff vector and consequently on the trading prices. Finally, let us formalize the notion of consensus set.
\begin{definition} [Consensus set]\label{def: consensus set}
The consensus set $\mathcal{A} \subset \mathbb{R}^{N^{2}}$ is defined as:
\begin{equation}\label{eq: consensus}
\mathcal{A} := \{\mathrm{col}(\boldsymbol{x}_1, \ldots, \boldsymbol{x}_N) \in \mathbb{R}^{N^{2}} \mid \boldsymbol{x}_i = \boldsymbol{x}_j, \forall i,j \in \mathcal{I}\}. 
\end{equation} $\hfill \square$
\end{definition}
\vspace{-0.1mm}
\section{P2P market as an assignment game}\label{sec: Peer-to-Peer market as coalitional game}
\subsection{Modelling}\label{subsec: P2P modelling}
  In this section, we present two setups of a bilateral P2P energy market as assignment games namely, single-contract market and multi-contract market. The participants of the market are partitioned into buyers and sellers where, a seller is a prosumer who owns an energy source including renewable (RES) and/or energy storage (ES) with an excess energy available, for a trading period, while a buyer can be a mere consumer as well. The market is operated by a central coordinator (market operator) who has complete information of buying and selling bids, and is also responsible for maximizing the overall market welfare.  {In Figure \ref{fig: market structure}, we illustrate the high level concept of the proposed P2P energy market structure.}
 
\begin{figure}[t]
\centering
\includegraphics[width =0.9\linewidth]{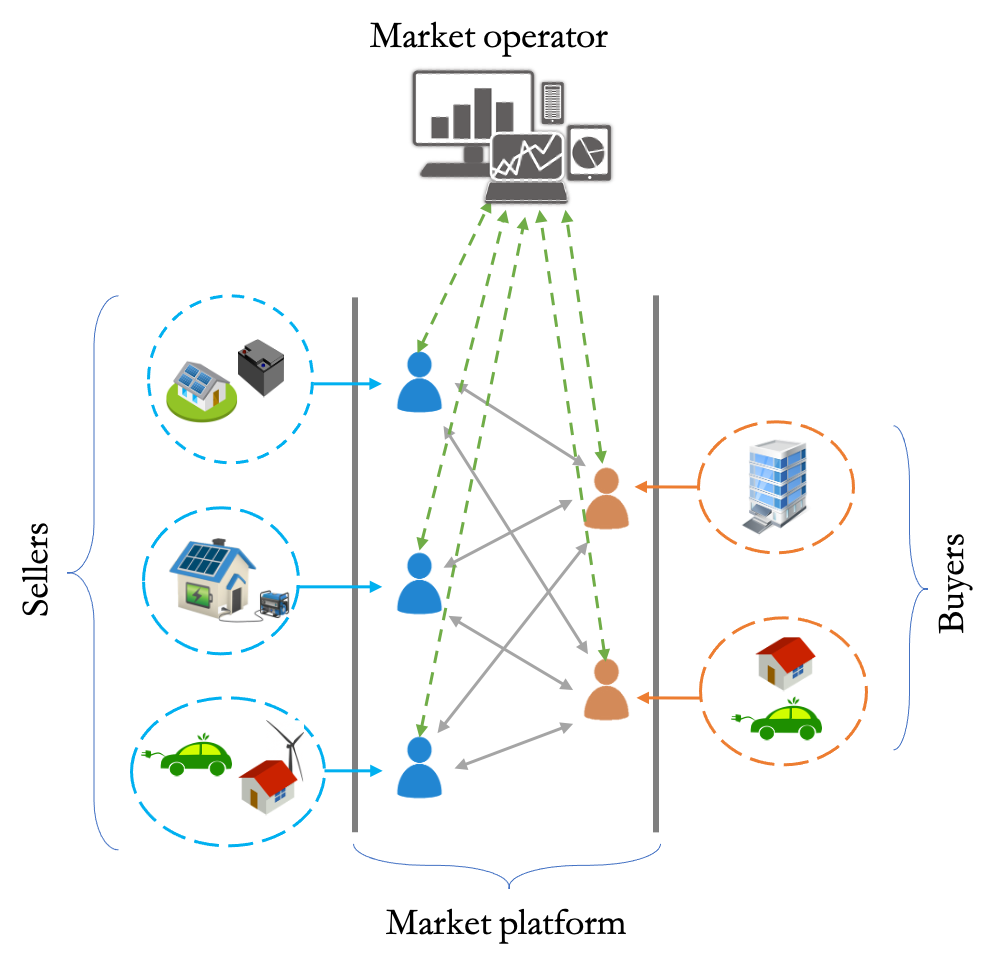}
\caption{ {Illustrative scheme of a bilateral P2P energy market.}}
\label{fig: market structure}
\vspace{-0.1mm}
\end{figure}
 \noindent Let $c_j$ denote the valuation of a seller $j \in \mathcal{I}_\mathcal{S}$ for each unit (e.g. 1 KWh) of energy and let $s_j$ represent the total energy offered; then, an offer of a seller $j$ is given by a pair $(c_j, s_j)$. Similarly, we denote the energy demand of a buyer $i \in \mathcal{I}_\mathcal{B}$ by $d_i$ and his valuation for the energy offered by seller $j$ by  $\alpha_{i,j}p_i$ where, $\alpha_{i,j}$ is the preference factor assigned by buyer $i$ to seller $j$. The preference factor allows a buyer to differentiate between the offered energy and can depend on several metrics, such as  source of energy (green vs. brown), location of the seller, user rating, etc. Furthermore, $p_i$ is a base price that the buyer $i$ is willing to pay for each unit of energy, hence they present their bid as $(\alpha_{i,j}p_i, d_i)$. Next, we impose some practical limitations on the valuations of buyers and sellers to make our P2P market setup economically appealing for the participants. \\
Prosumers generally sell energy to the grid via a retailer, thus an offer higher than the retailer's remuneration  makes it favorable for the sellers to join the P2P market instead. Furthermore, the rationality of the buyer demands that his offers are not higher than the cost of energy from the grid. Let $g_\text{b}$ and $g_\text{s}$ denote the buying price and the selling price of energy provided by the grid, respectively. Then, the buyer $i$ should offer a seller $j$ a higher energy price than that of the grid, but not more than the grid's selling price, i.e.,
\vspace{-0.1mm}
\begin{equation}\label{eq: condition 1}
 \alpha_{i,j}p_i \in (g_\text{b}, g_\text{s}]. 
 \vspace{-0.1mm}
\end{equation}

Analogously, we require the selling price of seller $j$ to fulfill the similar limitations,
\vspace{-0.1mm}
\begin{equation}\label{eq: condition 2}
c_j \in [g_\text{b}, g_\text{s}). 
\vspace{-0.1mm}
\end{equation}
We remark that similar assumptions are also made in \cite{tushar2019grid}.  {These assumptions are reasonable as the feed-in tariffs have seen a decreasing trend and were also discontinued in some regions \cite{tushar2018peer}. Nevertheless, if the feed-in tariff offered by the grid is very high, then it will impact the prosumer participation in the P2P market.}
\vspace{-0.1mm}
\subsection{Single-contract market}\label{subsec: single-contract}
In a single-contract P2P market setup, the buyers and sellers make one-to-one bilateral contracts that generate certain utility (value) for both. Let buyer $i \in \mathcal{I}_\mathcal{B}$ and seller $j \in \mathcal{I}_\mathcal{S}$ make a bilateral contract; then, the contract generates the value,
\vspace{-0.1mm}
\begin{equation}\label{eq: contract}
    v(i,j) = \max \{0, \alpha_{i,j}p_i - c_j\} \min\{s_j, d_i\}
    \vspace{-1.5mm}
\end{equation}

Let us elaborate on the formulation of the bilateral contract value in (\ref{eq: contract}). First, the contract is only viable when buyer's valuation of the energy is higher than seller's demand, i.e., $\alpha_{i,j}p_i > c_j$. If $s_j \geq d_i$, then the welfare generated by each traded unit is given by $\alpha_{i,j}p_i - c_j$ where, the total traded units are $d_i$. Now after the bilateral contract, the excess energy of the seller $(s_j - d_i)$ is sold to the grid. Analogously if $d_i > s_j$. We note that, the value of a non-viable contract will be zero. \\
Due to the bilateral structure of our P2P market, we can express the worth of possible contracts in a matrix form, which further allows us to model the market welfare maximization as an assignment problem.  Let  $M = [v(i,j)]_{(i,j) \in \mathcal{I}_\mathcal{B}\times \mathcal{I}_\mathcal{S}}$ be an assignment matrix where each element $v(i,j)$ represents the value of a bilateral contract between buyer $i$ and seller $j$. Then, we denote the corresponding assignment game by $\mathcal{M}=\left(\mathcal{I}_{B} \cup \mathcal{I}_{S}, v_{M}\right)$. The resulting value function of an assignment game $v_M(S)$ utilized by the market operator is given by the following assignment problem, for each $S \subseteq \mathcal{I}$:
\vspace{-0.1mm}
\begin{equation}\label{eq: assignment game}
\mathbb{P}(S):
\left\{ \quad
\begin{aligned}
 &\displaystyle \underset{\mu}{\max}  \sum_{i \in \mathcal{I}_{\mathcal{B}}\cap S} \sum_{j \in \mathcal{I}_{\mathcal{S} }\cap S} v(i,j) \mu_{i, j} \\
    & \displaystyle \:\: \mathrm{s.t.}  \sum_{i \in \mathcal{I}_{\mathcal{B}}\cap S}  \mu_{i, j} \leq 1 \qquad \quad \forall j \in \mathcal{I}_{\mathcal{S} }\cap S \\
    & \displaystyle \quad \;\;\:  \sum_{j \in \mathcal{I}_{\mathcal{S}}\cap S}  \mu_{i, j} \leq 1 \qquad \quad \forall i \in \mathcal{I}_{\mathcal{B} }\cap S 
\end{aligned}
\right.
\vspace{-0.1mm}
\end{equation}
with matching factors $\mu_{i, j} \in \{0, 1\}$, where $\mu_{i, j} = 1$ represents the matching between buyer $i$ and seller $j$.  {The problem (\ref{eq: assignment game}) determines the optimal assignment of buyers to sellers and the constraints imposed on the matching factors ensure that one buyer is matched to only one seller, i.e., one-to-one matching.} By using the results of the assignment problem in (\ref{eq: assignment game}) the market operator can evaluate the core of the game, as in (\ref{core}). We note that, even though the assignment problem in (\ref{eq: assignment game}) is a combinatorial optimization problem, because of its special structure  it can be solved in polynomial time using specifically designed algorithms like the Hungarian algorithm. Next, we list the notable features of our bilateral P2P market design:  
\begin{itemize}
    \item Existence: There always exist a set of bilateral contracts which is satisfactory for all of self-interested participants. In other words, the core of a bilateral P2P energy market is always non-empty (Remark \ref{lem: non empty core}).
    \item Product differentiation: Buyers can prioritise sellers or the categories of sellers via preference factors $\alpha_{i,j}$, based on the desired criteria  (e.g. green energy).
    \item Mechanism properties: Market formulation ensures the desirable economic properties of the clearing mechanism.
    \item Social optimality: The bilateral contracts maximize the overall welfare of the market and the contract price is negotiated internally between buyers and sellers.
\end{itemize}
\vspace{-0.1mm}
\subsection{Multi-contract market}\label{subsec: multi contract}
The formulation of a P2P market presented in Section \ref{subsec: single-contract}  is a one step single-contract bilateral market where each buyer can make an energy trade with only one seller and vice versa. Therefore, even though the proposed formulation maximizes the overall welfare, the market participants on both sides can have partially fulfilled energy trades. Hence, in this section we extend the single-contract P2P energy market to accommodate multiple contracts between buyers and sellers which in turn allows for the complete fulfilment of energy trades.\\
For an assignment market, we model multiple contracts between buyers and sellers by granulation of energy demand or offered into the units (packets) of fixed size (e.g. 1 KWh). Consequently, the matching takes place between these units of energy. Another way of looking at this setup is that each market participant (buyer or seller) is represented in the market by multiple agents, with each agent offering or demanding single unit of energy. Hence, the number of agents representing each participant in the market are equal to the number of energy units offered/demanded. To provide further flexibility, in our multi-contract model, we allow participants to associate different trading characteristics (e.g. valuation, energy source) to each traded unit. For example, a seller can offer energy units from RES (green) and an energy storage (possibly brown); similarly a buyer can bid higher for the energy needed for the critical tasks and lower for the deferrable tasks.  In the mathematical formulation, we interpret each agent as an independent seller or buyer, thus the resulting value $v(i,j)$ generated by contracts is similar to the expression in (\ref{eq: contract}) for single traded unit ($d_i = s_i = 1$ unit), i.e., $\textstyle  v(i,j) = \max \{0, \alpha_{i,j}p_i - c_j\},$
and the corresponding assignment game $\mathcal{M}=\left(\mathcal{I}_{B} \cup \mathcal{I}_{S}, v_{M}\right)$ is solved using the assignment problem in (\ref{eq: assignment game}). \\
In the multi-contract setup, in addition to maximizing the overall welfare, we can maximize the energy traded inside the bilateral P2P energy market by varying the level of granulation. Specifically, if all the contracts are viable and  $ \textstyle \sum_{i \in \mathcal{I}_{\mathcal{B}}} d_i >  \sum_{j \in \mathcal{I}_{\mathcal{S}}} s_j $, then by selecting the appropriate size of single energy unit, we can ensure that the total energy offered will be traded inside the P2P market, bilaterally.\\ 
The additional features of this multi-contract setup, however come at the cost of higher computational burden due to increased number of agents, representing the trade of each energy unit. We note that the appropriate size of the energy unit, decided by the market operator, can limit the number of agents and the associated computational burden. Furthermore, single-contract and multi-contract setup can be deployed in different contexts. For example, the former is more suitable for implementation in the larger scales, whereas, the later can bring additional features for localized implementation with lower number of participants such as in energy communities. 
\vspace{-0.1mm}
\section{Distributed solution mechanism}\label{sec: solution mechanism}
After the market operator solves (\ref{eq: assignment game}), in the second stage of our design, the participants negotiate among themselves for a bilateral agreement on the trading price. Here, our goal is to enable the participants to autonomously reach a consensus on a set of bilateral contract prices such that no party can raise any objection on the contracts. Therefore, we propose a novel negotiation mechanism that allows for faster convergence rates, thus it is suitable for both single and multi-contract market setups. We model our algorithm in a distributed architecture, where a central market operator with complete information of the game initially transmits information of the bounding sets in (\ref{eq: bounding set}) to the respective agents (market participants).  {The knowledge of a bounding set implies that each agent knows the values of their own coalitions only.} After receiving the required information, each agent distributedly proposes a payoff allocation for all the agents. We prove that even with the partial information available, the proposed solution mechanism converges to a stable payoff distribution. The mechanism for the proposed bilateral P2P electricity market is detailed in Figure \ref{flow chart}. In particular, we design a distributed fixed-point algorithm, using which the agents can reach consensus (\ref{eq: consensus}) on a payoff distribution in the core of the P2P market in (\ref{eq: assignment game}). 

\tikzstyle{decision} = [diamond, draw, aspect=3.5,
    text width=5em, text badly centered, node distance=2cm, inner sep=0pt]
\tikzstyle{block} = [rectangle, draw,
    text width=5.2em, text centered, minimum height=1em]
\tikzstyle{blockthin} = [rectangle, draw,
    text width=4em, text centered, minimum height=1em]
\tikzstyle{blocklong} = [rectangle, draw,
    text width=5cm, text centered, minimum height=1em]
\tikzstyle{blocklong2} = [rectangle, draw,
    text width=7cm, text centered, minimum height=1em]   
\tikzstyle{line} = [draw, -latex']

\begin{figure}
    \centering
    
\begin{tikzpicture}[node distance = 1.15cm, auto]
    \node [blocklong2] (init) {\small market operator evaluates $v(i,j)$ by (\ref{eq: assignment game}) at $k=0$};

    \node [block, below of=init] (agent2) {\small agent 2, $\boldsymbol{x}_2^0$};
    \node [block, right of=agent2, node distance=3cm] (agentN) {\small agent $N$, $\boldsymbol{x}_N^0$};
    \node [block, left of=agent2, node distance=3cm] (agent1) {\small agent 1, $\boldsymbol{x}_1^0$};
    
    \node [blocklong, below of=agent2] (share) {\small share $\boldsymbol{x}_i^k$ with neighbour agents} ;
    \node [blocklong, below of=share] (evaluate) {\small agents update proposals $\boldsymbol{x}_i^k$ by (\ref{main_it})} ;
    
    \node [decision, below of=evaluate, node distance=1.15cm] (decide) {\small $\boldsymbol{x}^k \in \mathcal{X}^*$?};
    \node [blockthin, right of=decide, node distance=3cm] (update) {\small $k \leftarrow k+1$};
    \node [blockthin, left of=decide, node distance=3cm] (stop) {\small stop};
    \node at ($(agent2)!.5!(agentN)$) {\ldots};
    
    \path [line]  (init)  -- node [left] {\small $\mathcal{X}_1 \;\; $} (agent1);
    \path [line] (init) -- node  {\small $\mathcal{X}_2$} (agent2);
    \path [line] (init) -- node [near end] {\small $\mathcal{X}_N$} (agentN);
    
    \path [line] (agent2) -- (share);
    \path [line] (agent1) -- (share);
    \path [line] (agentN) -- (share);
    \path [line] (share) -- (evaluate);
    \path [line] (evaluate) -- (decide);
    \path [line] (decide) -- node {no} (update);
    \path [line] (update) |- (share);
    \path [line] (decide) -- node {yes}(stop);
\end{tikzpicture}

    \caption{ {Flowchart of our proposed bilateral P2P market mechanism.}}
    \label{flow chart}
\vspace{-0.1mm}
\end{figure}
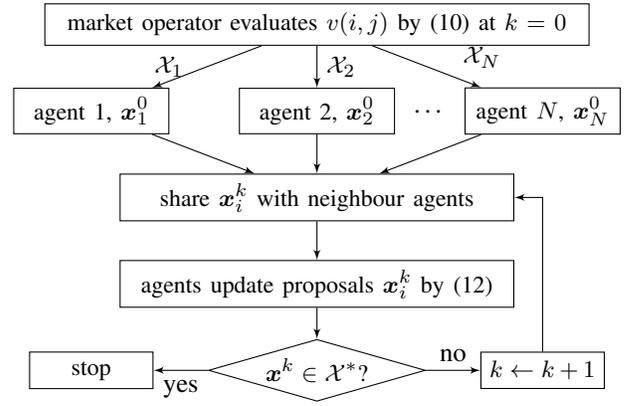
\vspace{-0.1mm}
\subsection{Distributed negotiation mechanism}\label{subsec: Solution mechanism}
We consider a bilateral negotiation process in which, at each negotiation step $k$, a buyer (seller) can communicate with a set of \textit{neighbouring} sellers (buyers) to bargain for their payoff. Therefore,  
we model their communication over a time-varying network represented by a bipartite graph $G^{k}=(\mathcal{I}_\mathcal{B} \times \mathcal{I}_\mathcal{S}, \mathcal{E}^{k}) $, where for $i \in \mathcal{I}_\mathcal{B}$ and $j \in  \mathcal{I}_\mathcal{S}$,  $(i, j) \in \mathcal{E}^{k}$ means that there is an active link between buyer $i$ and seller $j$ at iteration $k$ and they are then referred as neighbours. 
We assume that at each iteration $k$ an agent $i$ observes only the proposals of its neighbouring agents. Furthermore, we assume that each buyer-seller pair communicates at least once during a time period of length $Q$ (arbitrarily large), which ensures that the agents communicate sufficiently often. In other words, we assume that the union of the communication graphs over a time period of length $Q$ is connected. This assumption is fairly common in multi-agent coordination, e.g. \cite[Assumption 3.2]{nedic2017achieving}. 
\begin{assum}[$Q-$connected graph]\label{asm: Q-con }
For all $k \in \mathbb{N}$, the union graph $(\mathcal{I}_\mathcal{B} \times \mathcal{I}_\mathcal{S}, \cup_{l=1}^{Q} \mathcal{E}^{l+k})$  is strongly connected for some integer $Q \geq 1$.  $\hfill \square$
\end{assum}
The edges (links) in the communication graph $G^{k}$ are weighted using an adjacency matrix $W^{k} = [w_{i,j}^k]$, whose element $w_{i,j}^k$ represents the weight assigned by agent $i$ to the payoff proposal of agent $j$, ${ \boldsymbol{x}}_j^{k}$ where, for some $j$, $w_{i,j}^k = 0$ implies that the agent $i$ does not negotiate with agent $j$ at iteration $k$, i.e., $(i, j) \notin \mathcal{E}^{k} $. We note that in a P2P market the buyers (sellers) do not negotiate among themselves  hence, $w_{i,j}^k = 0, \text{ for all } (i, j) \in \mathcal{I}_\mathcal{B}$    $(\mathcal{I}_\mathcal{S}) $.  Furthermore, to ensure that all the agents have sufficient influence on the resulting payoff distribution, we assume the adjacency matrix to be doubly stochastic with positive diagonal, which means that an agent always gives some weight to his previous proposal.
\begin{assum}[Stochastic adjacency matrix]\label{asm: graph}
 For all $k \geq 0$, the adjacency matrix $W^{k} = [w_{i,j}^k]$ of the communication graph $G^k$ is doubly stochastic, i.e., \(\sum_{j=1}^{N} w_{i,j}=\sum_{i=1}^{N} w_{i,j}=1\), its diagonal elements are strictly positive, i.e., $w_{i,i}^k > 0, \text{ for all } i \in \mathcal{I}$ and $\exists$ $\gamma > 0$ such that $w_{i,j}^k \geq \gamma$ whenever $w_{i,j}^k > 0$ \cite[Assumption 3.3]{nedic2017achieving}. $\hfill \square$

\end{assum}
We further assume that the elements of the communication matrix $W^{k}$ take values from a finite set hence, finitely many adjacency matrices are available.
\begin{assum}[Finitely many adjacency matrices]\label{asm: fixed graph}
The adjacency matrices $\{W^k\}_{k \in \mathbb{N}}$ of the communication graphs belong to $\mathcal{W}$, a finite family of matrices that satisfy Assumption \ref{asm: graph}, i.e., $W^k \in \mathcal{W}$ for all $k \in \mathbb{N}$.  $\hfill \square$
\end{assum}
This assumption on the adjacency matrices is purely technical and allows us to exploit important results from the literature, for proving convergence of our negotiation mechanism. We remark that the set of adjacency matrices can be arbitrarily large hence Assumption \ref{asm: fixed graph} poses no practical limitation on our negotiation mechanism, which we propose next. \\
At each negotiation step $k$, an agent $i$ bargains by proposing a payoff distribution $\boldsymbol{x}_i^k \in \mathbb{R}^N$, for all the agents. To evaluate a proposal, they first take an average of the estimates of neighboring agents, $\boldsymbol{x}_j^k$ such that $(i,j) \in \mathcal{E}^{k}$, weighted by an adjacency matrix $W^k$, $\sum_{j=1}^{N} w_{i,j}^k \boldsymbol{x}_{j}^{k}$.  Next, agent $i$ utilizes a partial game information in the form of a bounding half-space $H_i^k \in \mathcal{H}_i$ as in (\ref{eq: hyperplanes}) of a bounding set in (\ref{eq: bounding set}). An agent selects the half-spaces from the set $\mathcal{H}_i$ such that each bounding half-space appears at-least once in every $Q$ negotiation steps with $Q$ as in Assumption \ref{asm: Q-con }. In practice, one way of selecting these half-spaces can be a predefined sequence that is arbitrarily chosen by each agent. 
Finally, agent $i$ projects the average $\hat{\boldsymbol{x}}_i^{k}:= \sum_{j=1}^{N} w_{i,j}^k \boldsymbol{x}_{j}^{k} $ on the bounding half-space. Thus, the algorithm reads as
\vspace{-0.1mm}
\begin{equation}\label{proj_main_it}
  \boldsymbol{x}_i^{k+1}= \mathrm{proj}_{H_i^k}\textstyle (\hat{\boldsymbol{x}}_i^{k}). 
\end{equation}
 {The protocol in (\ref{proj_main_it}) allows agents to propose a payoff at each negotiation step that is acceptable for them.} Let us further generalize the iteration in (\ref{proj_main_it}) by replacing the projection operator, $\mathrm{proj}(\cdot)$, with a special class of operators namely, paracontractions. This generalization enables the agents to choose any paracontraction operator $T_i^k$, for evaluating a payoff proposal $\boldsymbol{x}_i^k$, which in turn allows for a faster convergence to the interior of the core in (\ref{core assignment}). The latter is an important feature because the interior of the core is associated with the fairness of the payoff in assignment games. Specifically, for each $i \in \mathcal{I}$, we propose the negotiation protocol  $\boldsymbol{x}_i^{k+1}= T_i^{k}\textstyle (\hat{\boldsymbol{x}}_i^{k})$,
that in collective form, reads as the fixed-point iteration
 \begin{equation}\label{main_it}
    { \boldsymbol{x}}^{k+1} = \boldsymbol{T}^k (\boldsymbol{W}^k { \boldsymbol{x}}^{k}),
\end{equation}
 where  $\boldsymbol{T}^k (\boldsymbol{x}):= \mathrm{col}(T_1^{k}(\boldsymbol{x}_1), \ldots,  T_N^{k}(\boldsymbol{x}_N))  $ and $\boldsymbol{W}^k := W^{k} \otimes I_N $ represents an adjacency matrix. In (\ref{main_it}), we require the paracontraction operator $T_i^k$ to have $H_i^k \in \mathcal{H}_i$ in (\ref{eq: hyperplanes}) as fixed-point set, i.e., $ \mathrm{fix}(T_i^k) = H_i^{k} $. 
 \begin{assum}[Paracontractions]\label{asm: fixed points of M}
 For $k \in \mathbb{N}$, $\boldsymbol{T}^k$ in (\ref{main_it}) is such that $T_i^k \in \mathcal{T}$, where $\mathcal{T}$ is a finite family of paracontraction operators such that $ \mathrm{fix}(T_i) = H_i$ with $H_i \in \mathcal{H}_i$ in (\ref{eq: hyperplanes}). $\hfill \square$
 \end{assum}
Here, for utilizing the negotiation mechanism in iteration (\ref{main_it}), an agent can choose any operator $T_i$ that satisfies Assumption \ref{asm: fixed points of M}. This choice can affect the speed of convergence, as demonstrated in Section \ref{sec: Numerical simulations}, and also the specific limit point inside the core. Examples of paracontractions include the projection on a closed convex set $C$, $\mathrm{proj}_C(\cdot)$, and the convex combination of projection and over-projection operators, i.e., $T = (1 - \beta) \mathrm{proj}_{C}(\cdot) + \beta \mathrm{overproj}_{C}(\cdot)$ with $\beta \in [0, 1)$.\\
We also assume that each $\boldsymbol{T}^k \in \mathcal{T}^N$ appears at least once in every $Q$ iterations of (\ref{main_it}), with $Q$ as in Assumption \ref{asm: Q-con }.
\begin{assum}\label{asm: Q admissible bargaining} 
Let $Q$ be the integer in Assumption \ref{asm: Q-con }. The operators $(\boldsymbol{T}^k)_{k \in \mathbb{N}}$ in (\ref{main_it}) are such that, for all $n \in \mathbb{N}$, $\bigcup_{k=n}^{n+Q}\{\boldsymbol{T}^k\} = \mathcal{T}^N$, with $\mathcal{T}$ as in Assumption \ref{asm: fixed points of M}. $\hfill \square$
\end{assum}
Next, we formalize our main convergence result for the negotiation mechanism in (\ref{main_it}).
\begin{theorem}[Convergence of negotiation mechanism]\label{theorem: main}
Let Assumptions \ref{asm: Q-con }$-$\ref{asm: Q admissible bargaining} hold. Let $\mathcal{X}^*:=\mathcal{A}\cap \mathcal{C}_M^N$ with $\mathcal{A}$ as in (\ref{eq: consensus}) and $\mathcal{C}_M$ being the core in (\ref{core assignment}). Then, {starting from any $\boldsymbol{x}^0 \in \mathbb{R}^{N^2}$,} the sequence \((\boldsymbol{x}^{k})_{k=0}^{\infty}\) generated by the iteration in (\ref{main_it}) converges to {some} $\bar{\boldsymbol{x}} \in \mathcal{X}^*$. $\hfill \square$
\end{theorem}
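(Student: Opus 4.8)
The plan is to read \eqref{main_it} as a Fejér-monotone fixed-point iteration with respect to the target set $\mathcal{X}^* = \mathcal{A} \cap \mathcal{C}_M^N$ and to proceed along the classical four steps: (i) verify that $\mathcal{X}^*$ is a nonempty set of common fixed points of all the operators appearing in \eqref{main_it}; (ii) establish Fejér monotonicity, hence boundedness and convergence of $(\|\boldsymbol{x}^k - \boldsymbol{y}\|)_k$ for each $\boldsymbol{y} \in \mathcal{X}^*$; (iii) prove that every cluster point of $(\boldsymbol{x}^k)_k$ lies in $\mathcal{X}^*$; and (iv) upgrade one cluster point in $\mathcal{X}^*$ to convergence of the whole sequence.

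First I would work in the Euclidean norm and record the two structural facts that drive the estimates. By Birkhoff's theorem every doubly stochastic $W^k$ is a convex combination of permutation matrices, so $\|\boldsymbol{W}^k\|_2 = \|W^k\|_2 \le 1$; thus $\boldsymbol{W}^k$ is nonexpansive and fixes the consensus set $\mathcal{A}$. Each $\boldsymbol{T}^k$ is a paracontraction by Assumption \ref{asm: fixed points of M}, hence nonexpansive, with $\mathrm{fix}(\boldsymbol{T}^k) = \prod_{i} H_i^k$. Since the core is nonempty (Remark \ref{lem: non empty core}) and is contained in every bounding half-space $H_i^k \in \mathcal{H}_i$ (Definition \ref{def: hyperplanes}, \eqref{core assignment}), the point $\mathrm{col}(y,\ldots,y)$ with $y \in \mathcal{C}_M$ shows $\mathcal{X}^* \neq \varnothing$, and every $\boldsymbol{y} \in \mathcal{X}^*$ satisfies both $\boldsymbol{W}^k \boldsymbol{y} = \boldsymbol{y}$ and $\boldsymbol{T}^k \boldsymbol{y} = \boldsymbol{y}$ for all $k$. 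Composing the two nonexpansive estimates then gives $\|\boldsymbol{x}^{k+1} - \boldsymbol{y}\| \le \|\boldsymbol{W}^k \boldsymbol{x}^k - \boldsymbol{y}\| \le \|\boldsymbol{x}^k - \boldsymbol{y}\|$ for every $\boldsymbol{y} \in \mathcal{X}^*$, which settles step (ii).

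The heart of the argument is step (iii). Telescoping the Fejér inequality shows the per-step decrements vanish, and because the estimate factors through $\boldsymbol{W}^k$, both the averaging gap $\|\boldsymbol{x}^k - \boldsymbol{y}\| - \|\boldsymbol{W}^k \boldsymbol{x}^k - \boldsymbol{y}\|$ and the paracontraction gap $\|\boldsymbol{W}^k \boldsymbol{x}^k - \boldsymbol{y}\| - \|\boldsymbol{x}^{k+1} - \boldsymbol{y}\|$ tend to $0$. I would then fix a cluster point $\bar{\boldsymbol{x}} = \lim_m \boldsymbol{x}^{k_m}$ and use the finiteness in Assumptions \ref{asm: fixed graph} and \ref{asm: fixed points of M} to pass to a further subsequence along which the length-$Q$ block of operators following each $k_m$ is one fixed admissible pattern; this lets me take limits by continuity. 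Over such a block, the half-space selection rule together with Assumption \ref{asm: Q admissible bargaining} ensures every bounding half-space is visited, while Assumption \ref{asm: Q-con } ensures the union graph is strongly connected. The vanishing paracontraction gaps, combined with the strict decrease of a paracontraction off its fixed set, force $\bar{\boldsymbol{x}}$ into every $H_i^k$ and hence into $\mathcal{C}_M^N$; the vanishing averaging gaps, combined with strong connectivity over the window, force the disagreement among the blocks of $\bar{\boldsymbol{x}}$ to zero, i.e. $\bar{\boldsymbol{x}} \in \mathcal{A}$. Therefore $\bar{\boldsymbol{x}} \in \mathcal{X}^*$.

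Step (iv) is then routine via Opial's argument: $(\|\boldsymbol{x}^k - \bar{\boldsymbol{x}}\|)_k$ converges because $\bar{\boldsymbol{x}} \in \mathcal{X}^*$, and since a subsequence tends to $\bar{\boldsymbol{x}}$ the limit must be $0$, so $\boldsymbol{x}^k \to \bar{\boldsymbol{x}} \in \mathcal{X}^*$. I expect the genuine obstacle to be step (iii): disentangling the interleaved, time-varying action of the averaging $\boldsymbol{W}^k$ and the paracontractions $\boldsymbol{T}^k$ over a $Q$-window so as to show simultaneously that the iterate is driven to consensus and onto all bounding half-spaces. This is precisely where the finiteness assumptions (to freeze the operator pattern along a subsequence) and the $Q$-window assumptions are indispensable. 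A more economical alternative would be to package \eqref{main_it} as a product of a finite family of paracontractions sharing the common fixed-point set $\mathcal{X}^*$ and to invoke the Elsner--Koltracht--Neumann-type convergence theorem exploited in \cite{bauso2015distributed}; the remaining work then reduces to verifying that each composite operator is a paracontraction with the correct fixed-point set, which is the content of steps (i)--(ii).
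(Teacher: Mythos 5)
Your proposal is essentially correct, but it reaches the conclusion by a more elementary, self-contained route than the paper. The paper's proof subsamples the iteration every $Q$ steps, shows (Lemma \ref{lemma: fix points}, built on the Fullmer--Morse composition results and Perron--Frobenius) that each length-$Q$ composite $\boldsymbol{T}^{tQ-1}\boldsymbol{W}^{tQ-1}\circ\cdots\circ\boldsymbol{T}^{(t-1)Q}\boldsymbol{W}^{(t-1)Q}$ is itself a paracontraction with fixed-point set exactly $\mathcal{X}^*=\mathcal{A}\cap\mathcal{C}_M^N$, observes via Assumptions \ref{asm: fixed graph} and \ref{asm: fixed points of M} that only finitely many such composites occur, and then concludes in one stroke from the Elsner--Koltracht--Neumann theorem (Lemma \ref{lemma: finite family}). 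Your ``more economical alternative'' in the last sentence \emph{is} precisely this proof; your primary route (Fej\'er monotonicity, vanishing decrements, freezing the block pattern along a subsequence, peeling the operators one by one to show every cluster point lies in $\mathcal{X}^*$, then upgrading to full convergence) amounts to re-deriving the Elsner-type theorem by hand. That costs you the delicate bookkeeping in your step (iii) --- propagating the subsequential limit through the intermediate iterates inside each $Q$-block before the strict-decrease/continuity contradiction can be applied --- which is exactly the work the cited lemmas package away; on the other hand, your step (iv) explicitly converts convergence of the subsampled sequence into convergence of the whole sequence $(\boldsymbol{x}^k)$, a point the paper's proof leaves implicit. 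Two small corrections: a paracontraction as defined here is continuous and strictly quasi-nonexpansive with respect to its fixed points, but it is \emph{not} in general nonexpansive, so the phrase ``hence nonexpansive'' for $\boldsymbol{T}^k$ should be replaced by the quasi-nonexpansiveness inequality $\|\boldsymbol{T}^k(\boldsymbol{z})-\boldsymbol{y}\|\le\|\boldsymbol{z}-\boldsymbol{y}\|$ for $\boldsymbol{y}\in\mathrm{fix}(\boldsymbol{T}^k)$, which is all your Fej\'er estimate actually uses; and the relevant norm throughout is the mixed norm $\|\cdot\|_{2,2}$ on the stacked space (Lemma \ref{lem: Doubly stochastic matrix}), which coincides with the Euclidean norm here, so no harm is done.
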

We provide the proof of Theorem \ref{theorem: main} in Appendix. We remark that, presenting the mechanism as a fixed-point iteration and in terms of operators allows us to utilize results from operator theory to keep our convergence analysis general and brief.
\vspace{-0.1mm}
\subsection{Technical discussion}\label{subsec: technical discussion}
Theorem \ref{theorem: main} shows that the repeated proposals by all agents, generated by our negotiation mechanism, eventually reach an agreement on a payoff that belongs to the intersection of the bounding sets, i.e., the core. This core payoff allows us to compute the \textit{stable} contract prices. Let the payoff of buyer $i$ and seller $j$ be $x_i$ and $x_j$ respectively then, the contract price is $\lambda_{i,j} = \alpha_{i,j}p_i - x_i$. We note that the core of an assignment game has a special structure with two extreme points, i.e. buyer optimal and seller optimal at its boundary. A buyer optimal payoff is the worst core payoff for the seller side and vice versa. Thus, the payoff in the interior of the core corresponds to a fairer allocation and a consensus on such allocation can be achieved via the proposed algorithm in (\ref{main_it}).  \\
Let us now mention the features of our algorithm that enhance its practicality.{First, for the negotiation, the market participants do not require full information of the game but only the values of their own contracts represented by the bounding sets in (\ref{eq: bounding set}), which is privacy preserving. Such a lower information requirement of our mechanism is a considerable benefit over the algorithm presented in \cite{bauso2015distributed}, which requires each participant to have complete information of the corresponding core set in (\ref{core}).} Secondly, utilizing the half-spaces $H_i \in \mathcal{H}_i$ as the fixed-point sets of the operators $T_i \in \mathcal{T}$ in (\ref{main_it}) allows us to design $\mathcal{T}$ as a set of linear operators. For example, let $\mathbf{e}_k$ be the vector of coefficients of the inequality that defines the bounding half-space, i.e., $H_{i}^{k}=\{y \in \mathbb{R}^{N} \mid \mathbf{e}_k^{\top} y \geq \eta\}.$ Then,  we can write the iteration in (\ref{proj_main_it}) as 
\vspace{-0.1mm}
$$ \displaystyle \boldsymbol{x}_i^{k+1}=  \hat{\boldsymbol{x}}_i^{k} +\frac{\eta- \mathbf{e}_k^{\top} \hat{\boldsymbol{x}}_i^{k} }{\|\mathbf{e}_k\|^{2}} \mathbf{e}_k, \vspace{-2mm} $$
if $\hat{\boldsymbol{x}}_i^{k} \notin H_{i}^{k}$ \cite[Example 28.16]{bauschke2011convex}.
This closed-form expression reduces the computational burden of our algorithm greatly since no optimization problem should be solved at each iteration. Clearly, both privacy preservation and low computational burden are highly desirable features of a market mechanism. 
\vspace{-0.1mm}
\section{Numerical simulations}\label{sec: Numerical simulations}
In this section, we simulate the proposed bilateral P2P energy market with prosumers employing the negotiation mechanism designed in Section \ref{sec: solution mechanism}. We conduct the analysis for the time slots that incur peak prices and have considerable PV generation and compare it with the conventional approach of trading with the grid via aggregators and retailers, to demonstrate the effectiveness of our algorithm and show the economic benefits for the prosumers. 
Next, the size of the time slots should be decided by the market operator considering the variation in the demand and production of energy at prosumer level. For our economic analysis, we use hourly time slots of four peak hours for each day over a week and for convergence analysis we consider 100 scenarios of prosumer demand and generation to report their average and spread of samples. Furthermore, as our focus is on the economic and algorithmic design of the market mechanism, we do not consider network constraints (as also done in \cite{tushar2018peer}, \cite{tushar2019motivational}) and remark that their incorporation would not effect the resulting market properties. 
\vspace{-0.1mm}
\begin{table}[]
 \centering
\caption{ {Profiles of buyers and sellers}}
\label{tab: profiles}
\begin{tabular}{ccccl} \toprule
 &  \multicolumn{2}{c}{Concern for} & & \\
\textbf{Buyers} & environment & rating & \textbf{Sellers} & Energy source \\ \midrule
B1& \checkmark & $\times$ & S1 & PV (green) \\ \midrule
B2& $\times$ & $\times$ & S2 & Storage (brown) \\ \midrule
B3& $\times$ & \checkmark & S3 & PV (green)  \\ \midrule
B4& \checkmark & \checkmark & S4 & fossil (brown)\\ \bottomrule
 \end{tabular}
\vspace{-0.1mm}
 \end{table}
\subsection{Simulation setup:} 
We consider 4 residential prosumers with energy deficiency and 4 with surplus to act as buyers and sellers, respectively for each time slot. During different sessions of the market, prosumers can vary between the roles of sellers and buyers, depending on their energy profiles.  However, during each session (time slot) a prosumer acts as either a buyer or a seller. The energy deficiency and surplus of each prosumer lies within the range of [2, 8]. Next, we purposefully build the profiles of prosumers to show diverse participation and to emphasize various features of our P2P market designs. To buy energy for a given time-slot, a buyer $i$ enters a P2P market with its bid and demand $(\alpha_{i,j}p_i, d_i)$ as in (\ref{eq: contract}) where, the factor $\alpha_{i,j}$ represents his preference valuation for the energy offered by seller $j$. For the design of preference factor, we let buyer specify his level of environmental concern (preference to the green energy) $\alpha^\text{g}_{i}$ on the scale of $\{0, \cdots, 5\}$  and his concern to seller's user rating $\alpha^\text{r}_j \in \{0, \cdots, 5\}$ by $\gamma^\text{r}_{i} \in \{0,1\}$, with 0 being indifference to the associated factor. Let us indicate the energy type of seller $j$ by $\gamma^\text{g}_{j} \in \{0,1\}$ with $1$ specifying green energy then, the preference factor is evaluated as $\alpha_{i,j} = 1 + 0.1 (\alpha^\text{g}_{i}\gamma^\text{g}_{j} +  \alpha^\text{r}_{j} \gamma^\text{r}_{i})$. The value of $\alpha^\text{g}_{i}$ is randomly chosen for the buyers who include the environmental concern in their profiles, given in Table \ref{tab: profiles}, and the consumer rating of each seller is chosen randomly from the range of $[3,5]$. We note that our design of preference factor is arbitrary and the market operator can design it differently to include other considerations.\\
The buyers choose base valuation of the energy $p_i$ such that their bid is higher than the grid's buying price $g_\text{b} = 0.05 $ £/kWh and not more than the grid's selling price $g_\text{s} = 0.17 $ £/kWh as in condition (\ref{eq: condition 1}) \cite{han2018incentivizing}. Furthermore, the sellers choose their valuation $c_j$ less than the grid's selling price $g_\text{s}$ as in (\ref{eq: condition 2}).
\vspace{-0.1mm}
\subsection{Bilateral P2P energy market}
In our P2P market setup, at the first stage, the market operator performs the optimal matching as formulated in (\ref{eq: assignment game}), which results in the optimal buyer-seller pairs. Next, the participants adopt the negotiation mechanism in (\ref{main_it}) to mutually decide the bilateral contract prices. In Figure \ref{fig: convergence}, we show, for a particular time slot, the convergence of our negotiation algorithm using the operators $\mathrm{proj}_{H^k}(\cdot)$ and $T_{H^k}:= (1 - \beta) \mathrm{proj}_{H^k}(\cdot) + \beta \mathrm{overproj}_{H^k}(\cdot)$, for single-contract and multi-contract market setups. We report the average of 100 samples of convergence trajectories obtained by varying energy offer and demand conditions and the sequence of half-spaces, i.e., negotiation strategy of each agent (see Section \ref{subsec: Solution mechanism}).  
We can observe that the operator $T_{H^k}$ results in the faster convergence, as we claimed in Section \ref{sec: solution mechanism}. We remark that the convergence speed of negotiation in a multi-contract market decreases with an increase in the level of energy granulation.  {In Figure \ref{fig: convergence_shade} we plot the spread of the sample trajectories  to illustrate the best and worst convergence scenario for both single-contract and multi-contract setups using the operator $T_{H^k}$. Next, we benchmark the computational performance of our algorithm. We choose a static case of a distributed bargaining algorithm proposed in \cite{nedic2013} for payoff allocation in coalitional games, as a benchmark. In Figure \ref{fig: benchmark_comparison}, we present the trajectories of our algorithm and the benchmark. Since, the benchmark algorithm utilizes the whole bounding set instead of just the bounding half space in (\ref{eq: hyperplanes}) at each iteration, it proceeds faster initially. However, in the long run, our proposed algorithm performs better. Lower information requirement in our approach makes the execution of a negotiation step considerably faster. In this simulation scenario, the average execution time of a negotiation step is about $40\times$ times faster than the benchmark.}

\begin{figure}[t]
\centering
\includegraphics[width=0.9\linewidth]{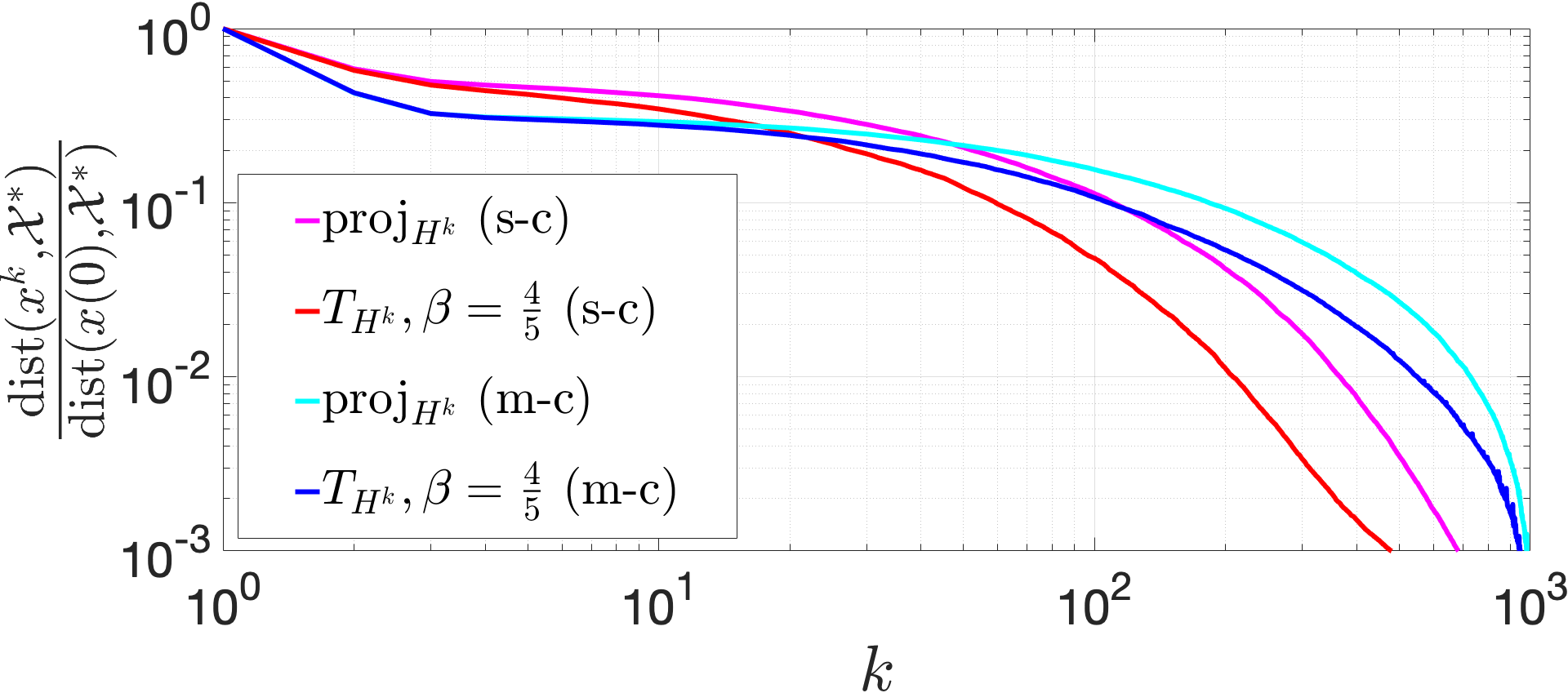}
\caption{\footnotesize Trajectories of $\mathrm{dist}(\boldsymbol{x}^k, \mathcal{X}^*)/\mathrm{dist}(\boldsymbol{x}^0, \mathcal{X}^*)$ for distributed bilateral negotiation with operators $\mathrm{proj}_{H^k}$ and $T_{H^k}:= (1 - \beta) \mathrm{proj}_{H^k}(\cdot) + \beta \mathrm{overproj}_{H^k}(\cdot)$ for single-contract (s-c) and multi-contract (m-c) market setups. }
\label{fig: convergence}
\vspace{-0.1mm}
\end{figure}

\begin{figure}[t]
\centering
\includegraphics[width=0.9\linewidth]{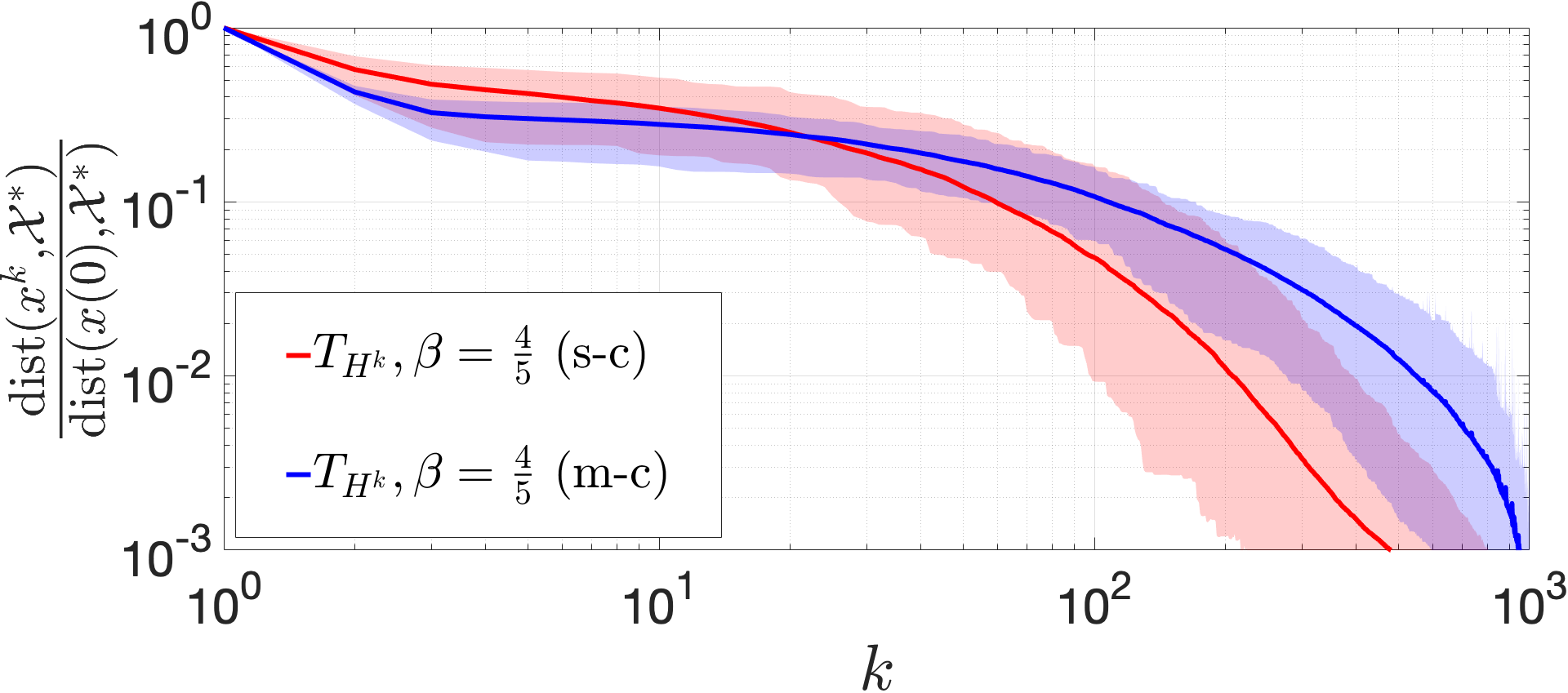}
\caption{ {Sampled average of the trajectories in Figure \ref{fig: convergence}, with spread of samples shown by shaded region.}}
\label{fig: convergence_shade}
\vspace{-0.1mm}
\end{figure}

\begin{figure}[t]
\centering
\includegraphics[width=0.9\linewidth]{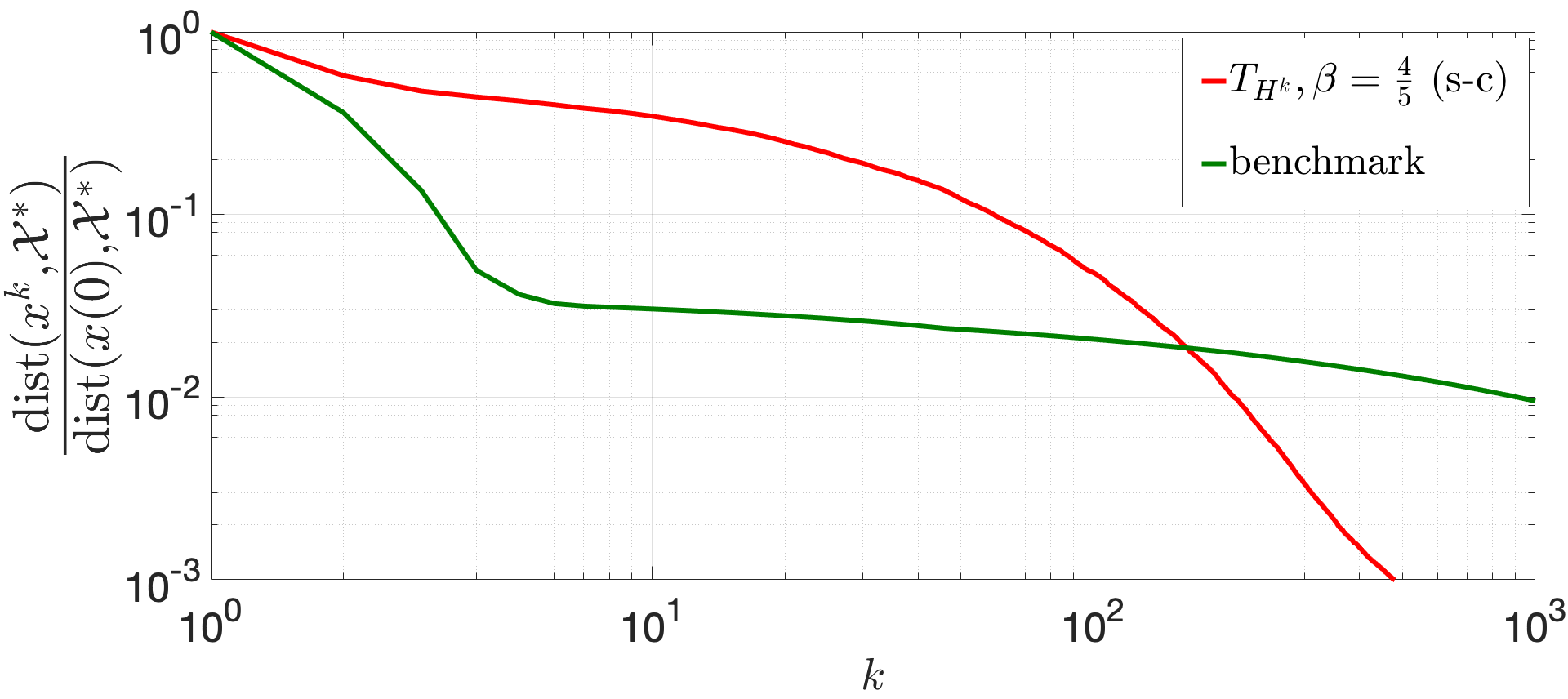}
\caption{ {Comparison of the trajectories of $\mathrm{dist}(\boldsymbol{x}^k, \mathcal{X}^*)/\mathrm{dist}(\boldsymbol{x}^0, \mathcal{X}^*)$ for our distributed bilateral negotiation (red) and the benchmark (green).}}
\label{fig: benchmark_comparison}
\vspace{-0.1mm}
\end{figure}

\begin{figure}[t]
\centering
\includegraphics[width=0.9\linewidth]{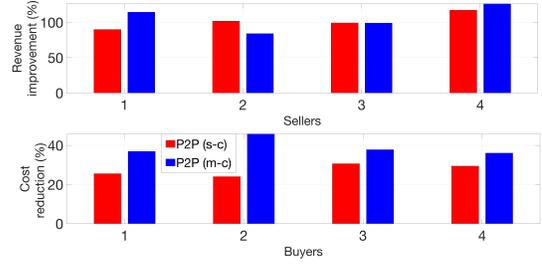}
\caption{Average revenue improvement (sellers) and average cost reduction (buyers) in single-contract (s-c) and multi-contract (m-c) P2P setups compared to trading with the grid.}
\label{fig: revenue}
\vspace{-0.1mm}
\end{figure}

\noindent Now, to evaluate the economic benefit for prosumers we set up P2P markets for peak hours and show the average change in revenues and costs of sellers and buyers, respectively, in Figure \ref{fig: revenue}. The revenues of sellers are higher and the costs of buyers are lower in both the market setups compared to the trade with the grid/retailer. We also observe that for the market with a similar category of participants (e.g. residential) and adequate participation from both sides (buyers and sellers), the economic performance of single-contract and multi-contract markets is comparable. This is due to the fact that similar excesses and demands of energy reduce the trade outside the P2P market. We also remark that moving from a single-contract to a multi-contract market can maximize the total amount of energy traded inside the P2P market hence increasing the overall market welfare but it does not guarantee individual improvements for all parties. This is because a random payoff inside the core set in (\ref{core assignment}) can assign a higher share of the value generated by a buyer-seller pair to either side of the market. For instance, this can be observed in Figure \ref{fig: revenue} for seller 2.  In Figure \ref{fig: trading}, we observe that the proposed P2P market designs strongly encourage prosumer participation in bilateral energy trading and, in particular, the multi-contract setup increases the internal energy trade.\\
 {Now, observe from (\ref{eq: assignment game}) that allowing for product differentiation (e.g. on an environmental or social basis) can increase the overall social welfare of the market compared to mere economic considerations. This increase comes from higher user satisfaction which is achieved by catering to their personal preferences. Therefore, we define a metric of user satisfaction as the number of times the buyers with green energy preferences are matched to the green sellers in 100 scenarios of our study. In Figure \ref{fig: user_satisfaction}, we compare this user satisfaction with a traditional market that only considers economic factors. The figure shows that product differentiation offers higher user satisfaction and thus encourages prosumer participation.}\\
 {Finally, in Table \ref{tab: market size}, we present the computational times of an agent's negotiation process in the proposed P2P market mechanism to numerically show the scalability with respect to the market size. We note that because of the distributed implementation, the negotiation protocol of agents run in parallel on their personal computational resource. Here, the simulations are executed in MATLAB 2020b installed on a laptop computer with 2.3 GHz Intel Core i5 and 8GB RAM. These numerical results can help in deciding how far ahead in time from the actual energy delivery should such markets operate, depending on the expected level of prosumer participation (e.g. the size of the energy community).}  {Furthermore, we can conclude that the adoption of the assignment game formulation provides an opportunity for the practical implementation of reasonably large P2P markets while guaranteeing contract prices that represent a competitive market equilibrium. We note that the regulator can also decide the market size systematically, e.g. on a geographical basis, and also put the eligibility criteria on the power capacity of the installed generation source. Such regulatory restrictions are often imposed on trading mechanisms for example, in Queensland, Australia, a prosumer cannot participate in a feed-in-tariff program if they have solar panels beyond 5kW capacity \cite{tushar2020coalition}}.

\begin{figure}[t]
\centering
\includegraphics[width=0.9\linewidth]{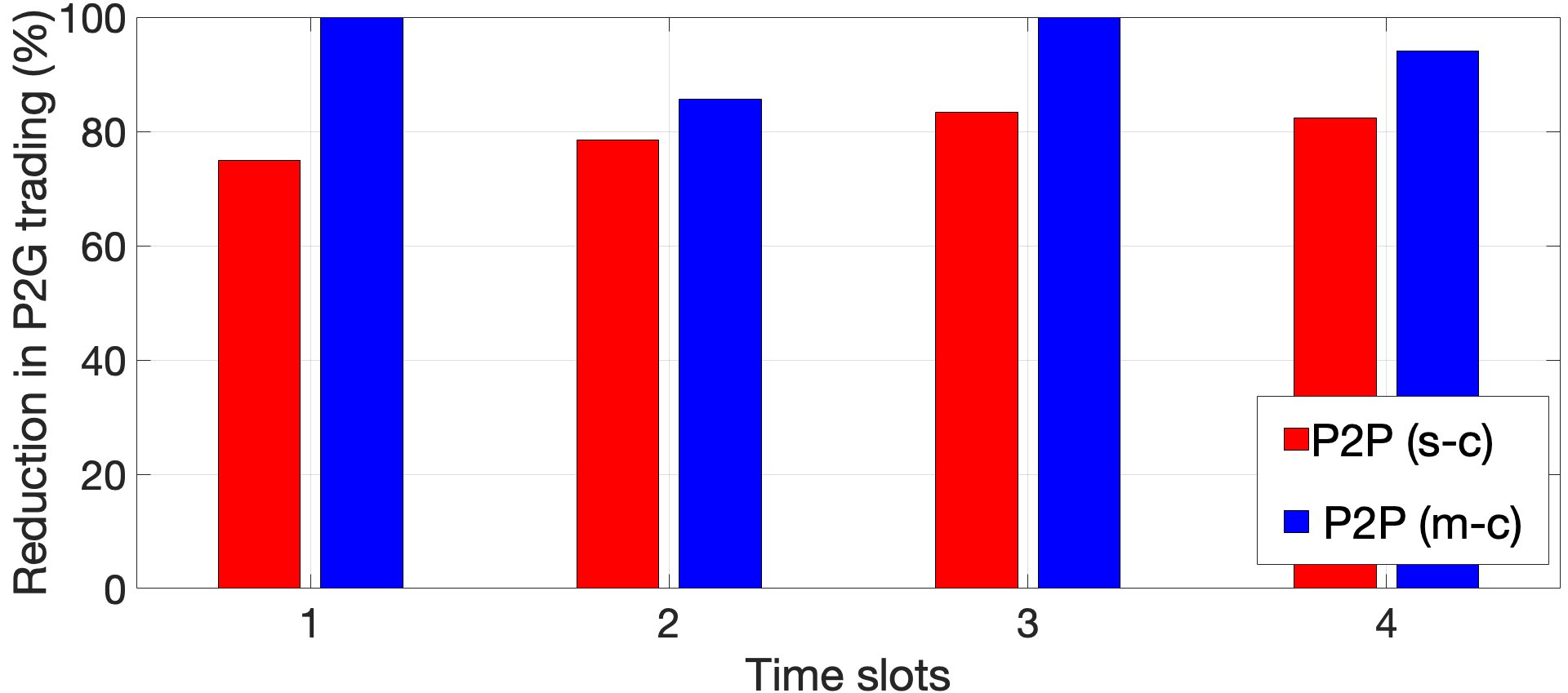}
\caption{Reduction in energy traded with the grid via single-contract (s-c) and multi-contract (m-c) P2P market setups.}
\label{fig: trading}
\end{figure}

\begin{figure}[t]
\centering
\includegraphics[width=0.9\linewidth]{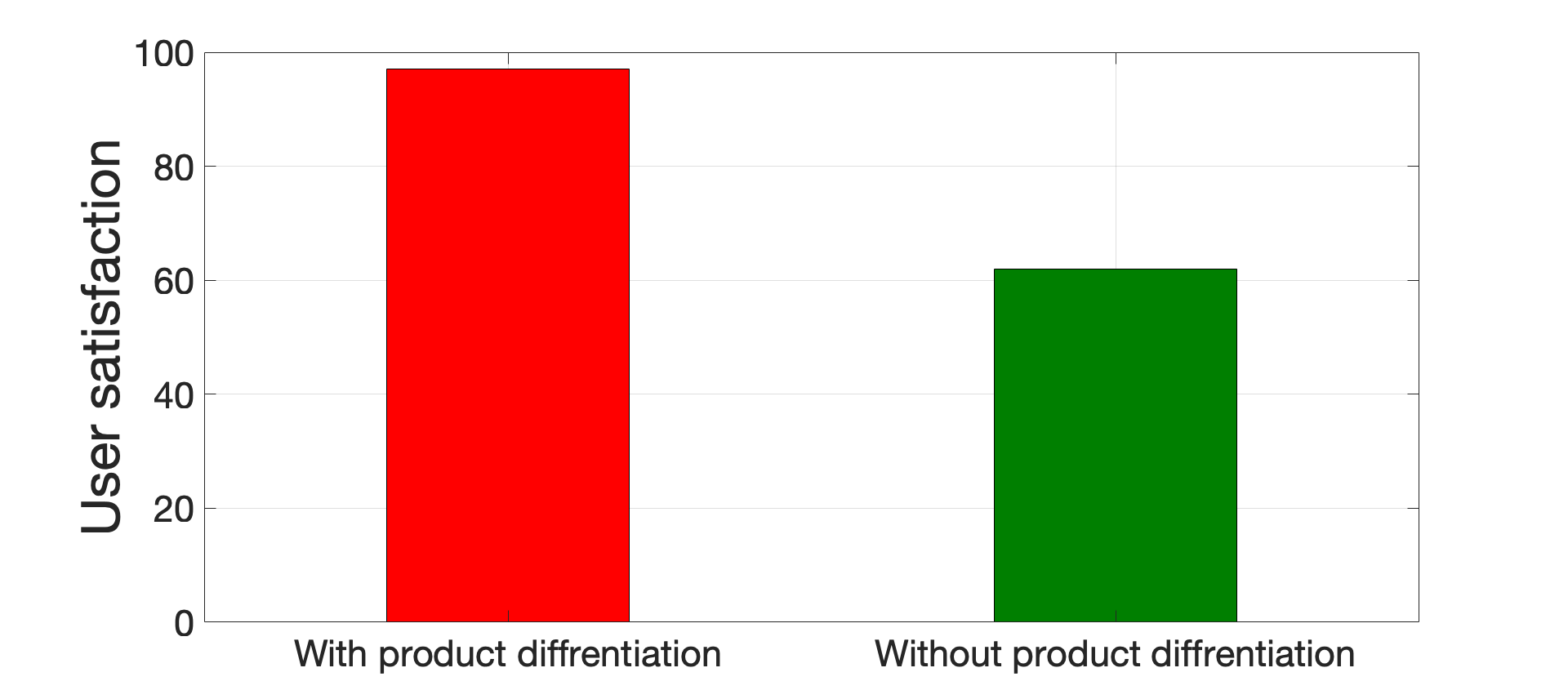}
\caption{ {User satisfaction for the buyers with a preference for green energy in P2P market setups with and without product differentiation.}}
\label{fig: user_satisfaction}
\end{figure}
\begin{table}[t]
 \centering
\caption{ {Avg. negotiation time per agent with market size}}
\label{tab: market size}
\begin{tabular}{lcccc} \toprule
Agents & $40$ & $60$ & $80$ & $100$ \\ \midrule
Negotiation time (seconds)&       $4.1437$  & $10.5930$   & $19.9979$ & $43.9713$\\ \bottomrule
 \end{tabular}
 \end{table}
\section{Conclusion}
We have formulated P2P energy trading as an assignment game (coalitional game) over time-varying  communication networks and proposed a novel distributed negotiation algorithm as a clearing mechanism that guarantees stable trading prices in a coalitional game theoretic sense and satisfies the desired economic properties.
The proposed bilateral P2P energy market designs namely, single-contract and multi-contract, encourage prosumers to participate by making P2P trading a favorable choice, considering their economic and social priorities. Furthermore, enabling product differentiation increases user satisfaction and allows for a higher overall market welfare. Finally, the negotiation mechanism via paracontraction operators enables faster convergence to a consensus on a set of bilateral contract prices that represent a competitive equilibrium and belong to the core.

{An interesting extension of our work would be the design of online mechanisms for real-time markets where the core set varies over time, thus accommodating for the short-term uncertainty in RES generation and demand. }

\appendix
To prove the convergence of (\ref{main_it}), as stated in Theorem \ref{theorem: main}, we first provide useful results regarding paracontractions.
\begin{lem}[\cite{Elsner1992}, Thm. 1]\label{lemma: finite family}
Let $\mathcal{T}$ be a finite family of paracontractions such that $\bigcap_{T \in \mathcal{T}} \mathrm{fix}(T)$ $ \neq \varnothing $. Then, the sequence $(\boldsymbol{x}^k)_{k \in \mathbb{N}}$ generated by $\boldsymbol{x}^{k+1} := T^k( \boldsymbol{x}^{k})$ converges to a common fixed-point of the paracontractions that occur infinitely often in the sequence. $\hfill \square$
\end{lem}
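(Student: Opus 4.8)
The plan is to recast the iteration (\ref{main_it}) as the action of a \emph{finite} family of composite operators and then invoke Elsner's theorem (Lemma \ref{lemma: finite family}). For $\boldsymbol{W} = W \otimes I_N$ with $W$ admissible and $\boldsymbol{T} \in \mathcal{T}^N$, define $M(\boldsymbol{x}) := \boldsymbol{T}(\boldsymbol{W}\boldsymbol{x})$, so that (\ref{main_it}) reads $\boldsymbol{x}^{k+1} = M^k(\boldsymbol{x}^k)$ with $M^k$ drawn from a finite collection $\mathcal{M} := \{\boldsymbol{T}\circ\boldsymbol{W}\}$ (finiteness by Assumptions \ref{asm: fixed graph} and \ref{asm: fixed points of M}). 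The proof then splits into three tasks: (i) show each $M \in \mathcal{M}$ is a paracontraction; (ii) show $\bigcap_{M\in\mathcal{M}}\mathrm{fix}(M) \supseteq \mathcal{X}^* \neq \varnothing$, so that Lemma \ref{lemma: finite family} applies and yields convergence to a common fixed point $\bar{\boldsymbol{x}}$ of the operators occurring infinitely often; (iii) identify that common fixed-point set with $\mathcal{X}^*$. Throughout I use that $\boldsymbol{T}$ is itself a paracontraction with $\mathrm{fix}(\boldsymbol{T}) = \prod_i H_i$, since a block operator assembled from paracontractions is a paracontraction, and that $\mathcal{X}^*$ is nonempty: the core is nonempty (Remark \ref{lem: non empty core}), and embedding any $z^\star \in \mathcal{C}_M$ diagonally gives $\mathrm{col}(z^\star,\dots,z^\star) \in \mathcal{A}\cap\mathcal{C}_M^N$.

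The crux, and the step I expect to be the main obstacle, is a property of the averaging operator $\boldsymbol{W}$: for every $\boldsymbol{y}$ with $\boldsymbol{W}\boldsymbol{y}=\boldsymbol{y}$, the equality $\|\boldsymbol{W}\boldsymbol{x}-\boldsymbol{y}\| = \|\boldsymbol{x}-\boldsymbol{y}\|$ must force $\boldsymbol{W}\boldsymbol{x}=\boldsymbol{x}$. I would prove this through the Birkhoff--von Neumann decomposition $W = \sum_\ell \theta_\ell P_\ell$ into permutation matrices: the triangle inequality gives $\|\boldsymbol{W}(\boldsymbol{x}-\boldsymbol{y})\| \le \sum_\ell \theta_\ell \|P_\ell(\boldsymbol{x}-\boldsymbol{y})\| = \|\boldsymbol{x}-\boldsymbol{y}\|$, and by strict convexity of the Euclidean norm equality forces all supported $P_\ell(\boldsymbol{x}-\boldsymbol{y})$ to coincide. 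Here the positive-diagonal hypothesis in Assumption \ref{asm: graph} is decisive: $w_{ii}>0$ guarantees that for each coordinate $i$ some supported permutation fixes $i$, whence $(\boldsymbol{W}\boldsymbol{x})_i = (\boldsymbol{x})_i$ for every $i$, i.e. $\boldsymbol{W}\boldsymbol{x}=\boldsymbol{x}$. This lemma is exactly what upgrades the mere nonexpansiveness of $\boldsymbol{W}$ into the strict decrease needed for a paracontraction, and it is where the non-primitivity of an individual $W$ would otherwise break the argument.

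With this in hand, task (i) follows from the chain $\|M\boldsymbol{x}-\boldsymbol{y}\| \le \|\boldsymbol{W}\boldsymbol{x}-\boldsymbol{y}\| \le \|\boldsymbol{x}-\boldsymbol{y}\|$, valid for $\boldsymbol{y}\in\mathrm{fix}(\boldsymbol{W})\cap\mathrm{fix}(\boldsymbol{T})$, the first bound using the paracontraction $\boldsymbol{T}$ and the second the nonexpansiveness of $\boldsymbol{W}$. Simultaneous equality would force $\boldsymbol{W}\boldsymbol{x}\in\mathrm{fix}(\boldsymbol{T})$ and, by the lemma, $\boldsymbol{W}\boldsymbol{x}=\boldsymbol{x}$, hence $\boldsymbol{x}=M\boldsymbol{x}$, contradicting $\boldsymbol{x}\notin\mathrm{fix}(M)$. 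The same computation shows $\mathrm{fix}(M) = \mathrm{fix}(\boldsymbol{W})\cap\mathrm{fix}(\boldsymbol{T})$, which plainly contains $\mathcal{X}^*$, settling task (ii); Lemma \ref{lemma: finite family} then delivers $\boldsymbol{x}^k \to \bar{\boldsymbol{x}}$ with $\bar{\boldsymbol{x}}$ fixed by every operator occurring infinitely often.

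Finally, for task (iii) I would unpack the window assumptions. Since $\bar{\boldsymbol{x}}\in\mathrm{fix}(M)=\mathrm{fix}(\boldsymbol{W})\cap\mathrm{fix}(\boldsymbol{T})$ for all infinitely-occurring $M$, the relations $\boldsymbol{W}\bar{\boldsymbol{x}}=\bar{\boldsymbol{x}}$ taken over the matrices $W$ whose union graph is strongly connected (Assumptions \ref{asm: Q-con } and \ref{asm: fixed graph}) force all blocks $\bar{\boldsymbol{x}}_i$ to agree by the standard consensus argument, i.e. $\bar{\boldsymbol{x}}\in\mathcal{A}$; and $\bar{\boldsymbol{x}}\in\mathrm{fix}(\boldsymbol{T})$ over all $\boldsymbol{T}\in\mathcal{T}^N$ (Assumption \ref{asm: Q admissible bargaining}), hence over every half-space of each $\mathcal{H}_i$ in (\ref{eq: hyperplanes}), gives $\bar{\boldsymbol{x}}_i \in \bigcap_{H\in\mathcal{H}_i}H = \mathcal{X}_i$. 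Combining consensus with the bounding-set representation $\mathcal{C}_M = \bigcap_i \mathcal{X}_i$ yields a common value lying in the core, so $\bar{\boldsymbol{x}}\in\mathcal{A}\cap\mathcal{C}_M^N = \mathcal{X}^*$, which completes the proof.
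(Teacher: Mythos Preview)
Your proposal does not prove the stated lemma. Lemma~\ref{lemma: finite family} is Elsner's theorem, quoted verbatim from \cite{Elsner1992} and not proved in the paper; it is a general convergence result for iterations drawn from a finite family of paracontractions. You do not attempt to establish it --- you explicitly invoke it in your first sentence as a black box. What you have actually written is a proof of Theorem~\ref{theorem: main}.

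Taken as a proof of Theorem~\ref{theorem: main}, your argument is correct and follows a genuinely different route from the paper's. The paper blocks the iteration into length-$Q$ composites $\boldsymbol{T}^{tQ-1}\boldsymbol{W}^{tQ-1}\circ\cdots\circ\boldsymbol{T}^{(t-1)Q}\boldsymbol{W}^{(t-1)Q}$, shows via Lemmas~\ref{lem: Doubly stochastic matrix}--\ref{lemma: fix points} that each such composite is a paracontraction whose fixed-point set is \emph{exactly} $\mathcal{X}^*$, and then applies Lemma~\ref{lemma: finite family} to the subsequence $\boldsymbol{z}^t=\boldsymbol{x}^{(t-1)Q}$. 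You instead work with the single-step maps $M=\boldsymbol{T}\boldsymbol{W}$, whose fixed-point sets $\mathrm{fix}(\boldsymbol{T})\cap\mathrm{fix}(\boldsymbol{W})$ merely \emph{contain} $\mathcal{X}^*$, apply Lemma~\ref{lemma: finite family} directly to the full sequence, and then use the ``infinitely often'' clause together with Assumptions~\ref{asm: Q-con } and~\ref{asm: Q admissible bargaining} (and an implicit pigeonhole over the finitely many pairs $(\boldsymbol{T},\boldsymbol{W})$) to force the limit into $\mathcal{X}^*$ a posteriori. Your route avoids the $Q$-blocking and the unaddressed passage from subsequence to full sequence in the paper's proof; the paper's route front-loads the fixed-point identification and so skips your post-hoc limit analysis. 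Your Birkhoff--von~Neumann argument for the paracontractivity of $\boldsymbol{W}$ is essentially a self-contained re-proof of the paper's cited Lemma~\ref{lem: Doubly stochastic matrix}.
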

\begin{lem}[Doubly stochastic matrix (\cite{Fullmer2018}, Prop. 5 )]\label{lem: Doubly stochastic matrix}
If $W$ is a doubly stochastic matrix then, the linear operator defined by the matrix $W \otimes I_{n}$ under Assumption \ref{asm: graph} is a paracontraction with respect to the mixed vector norm $\|\cdot\|_{2,2}$. $\hfill \square$
\end{lem}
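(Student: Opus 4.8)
The plan is to verify the paracontraction inequality $\|(W \otimes I_n)\boldsymbol{x} - \boldsymbol{y}\|_{2,2} < \|\boldsymbol{x} - \boldsymbol{y}\|_{2,2}$ directly for every $\boldsymbol{y} \in \mathrm{fix}(W \otimes I_n)$ and every $\boldsymbol{x} \notin \mathrm{fix}(W \otimes I_n)$, where $\|\mathrm{col}(\boldsymbol{z}_1,\ldots,\boldsymbol{z}_N)\|_{2,2}^2 = \sum_{i=1}^N \|\boldsymbol{z}_i\|_2^2$. Writing $P := W \otimes I_n$, continuity is immediate since $P$ is linear, and since $P\boldsymbol{y} = \boldsymbol{y}$ we have $P\boldsymbol{x} - \boldsymbol{y} = P(\boldsymbol{x} - \boldsymbol{y})$. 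Setting $\boldsymbol{z} := \boldsymbol{x} - \boldsymbol{y}$, the claim reduces to showing $\|P\boldsymbol{z}\|_{2,2} < \|\boldsymbol{z}\|_{2,2}$ whenever $P\boldsymbol{z} \neq \boldsymbol{z}$, where I use that $P\boldsymbol{z} - \boldsymbol{z} = P\boldsymbol{x} - \boldsymbol{x}$, so $P\boldsymbol{z} \neq \boldsymbol{z}$ is exactly the hypothesis $\boldsymbol{x} \notin \mathrm{fix}(P)$.

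Next I would exploit the block structure of the Kronecker product to decouple the problem coordinate-wise. For $\boldsymbol{z} = \mathrm{col}(\boldsymbol{z}_1, \ldots, \boldsymbol{z}_N)$ the $i$-th block of $P\boldsymbol{z}$ is $\sum_j w_{ij}\boldsymbol{z}_j$; denoting by $z_j^{(\ell)}$ the $\ell$-th entry of $\boldsymbol{z}_j$ and collecting these across blocks into $\zeta^{(\ell)} := (z_1^{(\ell)}, \ldots, z_N^{(\ell)})^\top \in \mathbb{R}^N$ for $\ell = 1, \ldots, n$, a direct rearrangement gives $\|P\boldsymbol{z}\|_{2,2}^2 = \sum_{\ell=1}^n \|W\zeta^{(\ell)}\|_2^2$ and $\|\boldsymbol{z}\|_{2,2}^2 = \sum_{\ell=1}^n \|\zeta^{(\ell)}\|_2^2$. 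Thus $P$ acts as $n$ independent copies of $W$ on $\mathbb{R}^N$, and it suffices to analyze $W$ itself in the Euclidean norm.

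The crux is the scalar claim: under Assumption \ref{asm: graph}, a doubly stochastic $W$ with strictly positive diagonal satisfies $\|Wv\|_2 \le \|v\|_2$ for all $v \in \mathbb{R}^N$, with equality if and only if $Wv = v$. For the weak inequality I would apply Jensen's inequality row-wise, using that each row of $W$ defines a convex combination: $\|Wv\|_2^2 = \sum_i (\sum_j w_{ij} v_j)^2 \le \sum_i \sum_j w_{ij} v_j^2 = \sum_j v_j^2 \sum_i w_{ij} = \|v\|_2^2$, where the last step uses that the columns of $W$ also sum to one. Equality in the $i$-th Jensen step forces all $v_j$ with $w_{ij} > 0$ to coincide; this is where the positive diagonal is essential, since $w_{ii} > 0$ then pins this common value to $v_i$ and yields $(Wv)_i = v_i$ for every $i$. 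Hence $\|Wv\|_2 = \|v\|_2$ forces $Wv = v$, which is the contrapositive of what I need.

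Finally I would reassemble. Applying the scalar claim to each $\zeta^{(\ell)}$ gives $\|W\zeta^{(\ell)}\|_2^2 \le \|\zeta^{(\ell)}\|_2^2$, so summing yields nonexpansiveness $\|P\boldsymbol{z}\|_{2,2} \le \|\boldsymbol{z}\|_{2,2}$. Since $P\boldsymbol{z} \neq \boldsymbol{z}$, there exist a block $i$ and a coordinate $\ell^\star$ with $(W\zeta^{(\ell^\star)})_i \neq z_i^{(\ell^\star)}$, i.e. $W\zeta^{(\ell^\star)} \neq \zeta^{(\ell^\star)}$; the scalar claim then makes the $\ell^\star$ term strict, so the summed inequality is strict and $\|P\boldsymbol{z}\|_{2,2} < \|\boldsymbol{z}\|_{2,2}$ follows. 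I expect the main obstacle to be precisely this strictness argument — characterizing the equality case of the row-wise Jensen inequality and invoking the strictly positive diagonal from Assumption \ref{asm: graph} to upgrade $\|Wv\|_2 = \|v\|_2$ to $Wv = v$ — whereas the nonexpansive inequality and the coordinate decoupling are routine by comparison.
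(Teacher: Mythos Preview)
Your argument is correct. The reduction to the scalar claim via the Kronecker block structure is clean, the Jensen step gives nonexpansiveness from double stochasticity alone, and your equality analysis is the right way to extract strictness: equality in the $i$-th Jensen inequality forces all $v_j$ in the support of row $i$ to coincide, and the strictly positive diagonal from Assumption~\ref{asm: graph} then pins that common value to $v_i$, yielding $(Wv)_i=v_i$ for every $i$.

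As for comparison with the paper: there is nothing to compare. The paper does not prove this lemma; it simply imports it as Proposition~5 of \cite{Fullmer2018} and uses it as a black box in the proofs of Lemma~\ref{lemma: fix points} and Theorem~\ref{theorem: main}. Your self-contained argument is therefore strictly more than what the paper supplies. Note also that you only use double stochasticity and $w_{ii}>0$; the uniform lower bound $\gamma$ in Assumption~\ref{asm: graph} is not needed for this lemma (it matters elsewhere for uniformity across the time-varying sequence $\{W^k\}$, not for the paracontraction property of a single $W$).
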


\begin{lem}[Composition of paracontracting operators  (\cite{Fullmer2018}, Prop. 1)]\label{prop:Composition of paracontracting operators }
Suppose $T_1, T_2 : \mathbb{R}^n \rightarrow \mathbb{R}^n$ are paracontractions with respect to same norm $\|\cdot\|$ and $\mathrm{fix}(T_1) \cap \mathrm{fix}(T_2) \neq \varnothing$. Then, the composition $T_1 \circ T_2$ is a paracontraction with respect to the norm $\|\cdot \|$ and $\mathrm{fix}(T_1 \circ T_2) = \mathrm{fix}(T_1) \cap \mathrm{fix}(T_2)$. $\hfill \square$
\end{lem}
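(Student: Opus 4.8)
The plan is to argue directly from the definition of paracontraction recalled in the Notation paragraph, and the first thing I would record is an elementary but sharp consequence of that definition. For any paracontraction $M$ with respect to $\|\cdot\|$ and any $y \in \mathrm{fix}(M)$, the strict inequality on $\mathbb{R}^n \setminus \mathrm{fix}(M)$ combined with the trivial identity $\|M(y') - y\| = \|y' - y\|$ for $y' \in \mathrm{fix}(M)$ gives the \emph{non-strict} bound $\|M(x) - y\| \le \|x - y\|$ for \emph{every} $x \in \mathbb{R}^n$, with equality if and only if $x \in \mathrm{fix}(M)$. This equality characterization is the single workhorse of the whole argument, applied three times.

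Writing $F := \mathrm{fix}(T_1) \cap \mathrm{fix}(T_2)$, which is non-empty by hypothesis, I would fix an arbitrary $y \in F$. Since $y$ lies in the fixed-point set of both operators, applying the observation twice yields the sandwich
\[
\|T_1(T_2(x)) - y\| \le \|T_2(x) - y\| \le \|x - y\|,
\]
valid for every $x \in \mathbb{R}^n$. The inclusion $F \subseteq \mathrm{fix}(T_1 \circ T_2)$ is immediate, since $x \in F$ forces $T_2(x) = x$ and then $T_1(T_2(x)) = T_1(x) = x$. For the reverse inclusion I take $x \in \mathrm{fix}(T_1 \circ T_2)$, so that $T_1(T_2(x)) = x$ and hence $\|T_1(T_2(x)) - y\| = \|x - y\|$; the sandwich then collapses to a chain of equalities. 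The equality characterization applied to $T_2$ turns $\|T_2(x) - y\| = \|x - y\|$ into $x \in \mathrm{fix}(T_2)$, so $T_2(x) = x$; applied to $T_1$ it turns $\|T_1(T_2(x)) - y\| = \|T_2(x) - y\|$ into $T_2(x) \in \mathrm{fix}(T_1)$, i.e. $x \in \mathrm{fix}(T_1)$. Thus $x \in F$, giving $\mathrm{fix}(T_1 \circ T_2) = F$.

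With the fixed-point set identified, the paracontraction property follows by the same bookkeeping. For $x \notin \mathrm{fix}(T_1 \circ T_2) = F$ and any $y \in F$, I would argue the sandwich is strict: were both inequalities equalities, the equality characterization would again yield $x \in \mathrm{fix}(T_2)$ and $T_2(x) \in \mathrm{fix}(T_1)$, forcing $x \in F$ and contradicting $x \notin F$. Hence $\|(T_1 \circ T_2)(x) - y\| < \|x - y\|$ for all $x \notin \mathrm{fix}(T_1 \circ T_2)$ and all $y \in \mathrm{fix}(T_1 \circ T_2)$, which is exactly the required strict inequality. Continuity of $T_1 \circ T_2$ is inherited from that of $T_1$ and $T_2$, completing the verification.

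The step I expect to be the genuine crux is the reverse inclusion $\mathrm{fix}(T_1 \circ T_2) \subseteq F$: \emph{a priori} a fixed point of the composition need not be fixed by either factor, and it is precisely here that the non-emptiness of $F$ is indispensable, since the common fixed point $y$ is what lets the two non-strict bounds be chained into a single equality and then separated again through the sharp equality condition. Once that inclusion is secured, the order of the proof matters — one must establish $\mathrm{fix}(T_1 \circ T_2) = F$ before verifying the strict contraction, so that the inequality can legitimately be checked against \emph{all} fixed points of the composition rather than only against the points of $F$.
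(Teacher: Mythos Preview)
Your argument is correct and complete: the non-strict bound with its sharp equality characterization is exactly the right tool, and you deploy it cleanly to get both $\mathrm{fix}(T_1 \circ T_2) = F$ and the strict inequality for the composition. The order you insist on (establish the fixed-point set first, then verify strict contraction against it) is the logically necessary one, and your remark that non-emptiness of $F$ is indispensable for the reverse inclusion is on point.

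As for comparison with the paper: there is nothing to compare. The paper does not prove this lemma at all; it merely quotes it as Proposition~1 of \cite{Fullmer2018} and moves on. Your write-up therefore supplies what the paper outsources, and in fact reproduces the standard proof one finds in the paracontraction literature.
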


\begin{lem}[Stacked vector of paracontractions (\cite{Fullmer2018}, Prop. 4)]\label{prop:stacked paracontraction operator}
Suppose each map $T_{1}, \ldots, T_{m}$ is a paracontraction with respect to $\|\cdot\|_{2}$. Then, the map $\boldsymbol{T}:= \mathrm{col}(T_1, \ldots,  T_N)$ is a paracontraction with respect to $\|\cdot\|_{2,2}$. $\hfill \square$
\end{lem}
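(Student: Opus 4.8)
The plan is to proceed directly from the definition of paracontraction, exploiting the block-diagonal action of $\boldsymbol{T}$. First I would identify the fixed-point set of the stacked map. Since $\boldsymbol{T}(\boldsymbol{x}) = \mathrm{col}(T_1(x_1), \ldots, T_N(x_N))$ acts block-wise, a point $\boldsymbol{x} = \mathrm{col}(x_1, \ldots, x_N)$ satisfies $\boldsymbol{T}(\boldsymbol{x}) = \boldsymbol{x}$ if and only if $T_i(x_i) = x_i$ for every $i$, i.e. $x_i \in \mathrm{fix}(T_i)$ for every $i$; hence $\mathrm{fix}(\boldsymbol{T}) = \prod_{i=1}^N \mathrm{fix}(T_i)$. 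Continuity of $\boldsymbol{T}$, required by the definition of paracontraction, is inherited from the continuity of each $T_i$, since each block of $\boldsymbol{T}$ depends continuously on its corresponding block of the argument only.

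Next, I would fix any $\boldsymbol{x} = \mathrm{col}(x_1, \ldots, x_N) \notin \mathrm{fix}(\boldsymbol{T})$ and any $\boldsymbol{y} = \mathrm{col}(y_1, \ldots, y_N) \in \mathrm{fix}(\boldsymbol{T})$, so that $y_i \in \mathrm{fix}(T_i)$ for each $i$. Using the mixed norm $\|\boldsymbol{z}\|_{2,2}^2 = \sum_{i=1}^N \|z_i\|_2^2$ for $\boldsymbol{z} = \mathrm{col}(z_1, \ldots, z_N)$, I would expand
$$\|\boldsymbol{T}(\boldsymbol{x}) - \boldsymbol{y}\|_{2,2}^2 = \sum_{i=1}^N \|T_i(x_i) - y_i\|_2^2,$$
and bound each summand separately. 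For every $i$ with $x_i \in \mathrm{fix}(T_i)$ one has $T_i(x_i) = x_i$, giving the equality $\|T_i(x_i) - y_i\|_2 = \|x_i - y_i\|_2$; for every $i$ with $x_i \notin \mathrm{fix}(T_i)$, the paracontraction property of $T_i$ together with $y_i \in \mathrm{fix}(T_i)$ yields the strict inequality $\|T_i(x_i) - y_i\|_2 < \|x_i - y_i\|_2$. In either case $\|T_i(x_i) - y_i\|_2 \le \|x_i - y_i\|_2$.

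The key step, and the only point requiring care, is the passage from the per-block inequalities to a strict inequality for the sum. Because $\boldsymbol{x} \notin \mathrm{fix}(\boldsymbol{T})$, the product characterization of $\mathrm{fix}(\boldsymbol{T})$ guarantees at least one index $i^\star$ with $x_{i^\star} \notin \mathrm{fix}(T_{i^\star})$, and for that index the bound above is strict. Summing, all terms satisfy $\le$ while the $i^\star$ term is strict, so
$$\|\boldsymbol{T}(\boldsymbol{x}) - \boldsymbol{y}\|_{2,2}^2 = \sum_{i=1}^N \|T_i(x_i) - y_i\|_2^2 < \sum_{i=1}^N \|x_i - y_i\|_2^2 = \|\boldsymbol{x} - \boldsymbol{y}\|_{2,2}^2.$$
Taking square roots gives $\|\boldsymbol{T}(\boldsymbol{x}) - \boldsymbol{y}\|_{2,2} < \|\boldsymbol{x} - \boldsymbol{y}\|_{2,2}$, which together with continuity establishes that $\boldsymbol{T}$ is a paracontraction with respect to $\|\cdot\|_{2,2}$. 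The subtlety I would stress is that one cannot apply the paracontraction estimate blockwise in a uniformly strict way, since blocks already sitting at their fixed points contribute only equalities; the argument hinges precisely on the observation that a non-fixed point of the stacked map must possess at least one strictly contracting block, and this single strict term is what drives the strict inequality in the aggregate.
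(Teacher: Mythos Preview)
Your proof is correct: the blockwise decomposition of $\mathrm{fix}(\boldsymbol{T})$ as $\prod_{i=1}^N \mathrm{fix}(T_i)$, the per-block paracontraction inequality, and the observation that at least one block is strictly contracting together yield the strict inequality in the mixed norm. The paper itself does not prove this lemma; it merely states it and attributes it to \cite{Fullmer2018}, Proposition~4, so there is no in-paper argument to compare against. Your direct argument is exactly the standard one underlying that cited result and is complete as written.
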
 
Using these properties, we now show that the sequence of operators generated by the iteration in (\ref{main_it}) is a paracontraction and the set of its fixed-points is a consensus in the intersection of the fixed-point sets of the operators.
\begin{lem}\label{lemma: fix points}
Let $Q$ be the integer in Assumption \ref{asm: Q-con }. Let  $\boldsymbol{T}_1, \ldots, \boldsymbol{T}_Q $ be paracontraction operators with  $  
\bigcap_{r=1}^Q \mathrm{fix}(\boldsymbol{T}_r) =: C$ and let $W_Q  W_{Q-1} \cdots  W_1 $  be the composition of the adjacency matrices where $W_r \in \mathcal{W}$, with $\mathcal{W}$ as in Assumption \ref{asm: fixed graph}. Let $\boldsymbol{W}_r := W_r \otimes I_N $. Then, the composed mapping $ \boldsymbol{x} \mapsto (\boldsymbol{T}_Q\boldsymbol{W}_Q  \circ \cdots \circ  \boldsymbol{T}_1 \boldsymbol{W}_1)(\boldsymbol{x})$ 
\begin{enumerate}[(i)]
    \item is a paracontraction with respect to norm $\| \cdot \|_{2,2}$;
    \item $\mathrm{fix}(\boldsymbol{T}_Q\boldsymbol{W}_Q  \circ \cdots \circ  \boldsymbol{T}_1 \boldsymbol{W}_1) =  \mathcal{A} \cap C$,  
\end{enumerate}
where $\mathcal{A}$ is the consensus set in (\ref{eq: consensus}). $\hfill \square$
\end{lem}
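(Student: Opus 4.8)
The plan is to realize the $Q$-step mapping as a composition of $2Q$ elementary paracontractions that share a common fixed point, and then to invoke the composition result (Lemma \ref{prop:Composition of paracontracting operators }) inductively, which yields both claims at once. First I would verify that every factor is a paracontraction with respect to the \emph{same} norm $\|\cdot\|_{2,2}$: by Assumption \ref{asm: fixed points of M} each $T_i^r \in \mathcal{T}$ is a paracontraction (w.r.t.\ $\|\cdot\|_2$), so Lemma \ref{prop:stacked paracontraction operator} makes the stacked operator $\boldsymbol{T}_r = \mathrm{col}(T_1^r,\dots,T_N^r)$ a paracontraction w.r.t.\ $\|\cdot\|_{2,2}$, while double stochasticity of $W_r$ (Assumption \ref{asm: graph}) together with Lemma \ref{lem: Doubly stochastic matrix} makes each $\boldsymbol{W}_r = W_r\otimes I_N$ a paracontraction w.r.t.\ the same norm. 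Hence all factors $\boldsymbol{W}_1,\boldsymbol{T}_1,\dots,\boldsymbol{W}_Q,\boldsymbol{T}_Q$ live in one contractive framework.

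Next I would exhibit a point common to all their fixed-point sets, which is precisely what Lemma \ref{prop:Composition of paracontracting operators } requires in order to chain them. Using non-emptiness of the core (Remark \ref{lem: non empty core}), I pick any core payoff $c \in \mathcal{C}_M$ and form the consensus vector $\boldsymbol{c} = \mathrm{col}(c,\dots,c) \in \mathcal{A}$. Since $W_r$ is stochastic it fixes constant vectors, so $\boldsymbol{c} \in \mathrm{fix}(\boldsymbol{W}_r)$ for every $r$; and since each bounding half-space $H_i^r \in \mathcal{H}_i$ contains the core, $c \in H_i^r = \mathrm{fix}(T_i^r)$ for all $i,r$, whence $\boldsymbol{c} \in \mathrm{fix}(\boldsymbol{T}_r) = \prod_{i} H_i^r$. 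Thus $\boldsymbol{c}$ lies in the fixed-point set of every factor, so all intersections encountered below are non-empty.

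With the common fixed point secured, I would apply Lemma \ref{prop:Composition of paracontracting operators } inductively along the composition: at each stage the incoming factor and the partial composition are both $\|\cdot\|_{2,2}$-paracontractions sharing $\boldsymbol{c}$, so their composition is again a paracontraction (this proves (i)) and its fixed-point set is the intersection of the two fixed-point sets. Telescoping gives $\mathrm{fix}(\boldsymbol{T}_Q\boldsymbol{W}_Q \circ \cdots \circ \boldsymbol{T}_1\boldsymbol{W}_1) = \bigcap_{r=1}^{Q}\bigl(\mathrm{fix}(\boldsymbol{T}_r) \cap \mathrm{fix}(\boldsymbol{W}_r)\bigr) = C \cap \bigcap_{r=1}^{Q}\mathrm{fix}(\boldsymbol{W}_r)$, using $C = \bigcap_{r}\mathrm{fix}(\boldsymbol{T}_r)$.

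The last and \emph{hardest} step is to identify $\bigcap_{r=1}^{Q}\mathrm{fix}(\boldsymbol{W}_r)$ with $\mathcal{A}$. The inclusion $\mathcal{A}\subseteq\bigcap_r\mathrm{fix}(\boldsymbol{W}_r)$ is immediate from double stochasticity. For the reverse, a vector $\boldsymbol{x}$ fixed by every $\boldsymbol{W}_r$ satisfies, coordinatewise, $(\boldsymbol{x}_i)_m = \sum_{j} w_{ij}^r (\boldsymbol{x}_j)_m$ for all agents $i$, all $r$, and all coordinates $m$. I would run a maximum-principle argument: fixing $m$, any agent attaining the maximal value of $(\boldsymbol{x}_\cdot)_m$ forces, through the convex-combination identity with nonnegative weights summing to one and positive diagonal, all its $G^r$-neighbours to attain that same value; propagating this across the edge sets $\mathcal{E}^1,\dots,\mathcal{E}^Q$ whose union is strongly connected (Assumption \ref{asm: Q-con }) spreads the maximal value to every agent, so $\boldsymbol{x}_i = \boldsymbol{x}_j$ for all $i,j$, i.e.\ $\boldsymbol{x}\in\mathcal{A}$. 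Substituting back gives $C \cap \mathcal{A} = \mathcal{A}\cap C$, establishing (ii). I expect this connectivity-to-consensus implication to be the crux: it is where Assumption \ref{asm: Q-con } enters, and one must argue that the individual averaging maps, possibly non-primitive on their own, admit only consensus vectors as fixed points once they are required to fix a point jointly across the whole $Q$-window.
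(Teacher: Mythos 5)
Your proposal is correct and follows the same overall architecture as the paper's proof: decompose the $Q$-step map into the elementary paracontractions $\boldsymbol{W}_1,\boldsymbol{T}_1,\dots,\boldsymbol{W}_Q,\boldsymbol{T}_Q$ (via Lemmas \ref{lem: Doubly stochastic matrix} and \ref{prop:stacked paracontraction operator}), chain them with the composition result of Lemma \ref{prop:Composition of paracontracting operators }, and then identify $\bigcap_{r}\mathrm{fix}(\boldsymbol{W}_r)$ with the consensus set $\mathcal{A}$. You depart from the paper in two places, both to your credit. First, you explicitly verify the hypothesis of Lemma \ref{prop:Composition of paracontracting operators } that the fixed-point sets have non-empty intersection, by constructing the consensus lift $\boldsymbol{c}=\mathrm{col}(c,\dots,c)$ of a core payoff $c\in\mathcal{C}_M$ (using Remark \ref{lem: non empty core} and the fact that each half-space $H_i^r$ contains the core); the paper leaves this step implicit, yet without it the inductive chaining is not licensed. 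Second, for the consensus identification the paper passes to the product $\boldsymbol{W}_Q\cdots\boldsymbol{W}_1$, argues its graph is strongly connected, and invokes the Perron--Frobenius theorem, whereas you keep the matrices separate and run a maximum-principle argument: the set of agents attaining the coordinatewise maximum is closed under neighbours in every $G^r$ (positive diagonals and convexity of the averaging), hence equals the whole agent set by $Q$-connectivity (Assumption \ref{asm: Q-con }). Your route is more elementary and self-contained; the paper's is shorter but relies on the (unproved) claim that the product matrix inherits strong connectivity from the union graph, which itself needs the positive-diagonal condition of Assumption \ref{asm: graph}. Both arguments are valid and use the same assumptions in the same places.
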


\begin{proof}
(i): It follows directly from Lemmas \ref{lem: Doubly stochastic matrix} and \ref{prop:Composition of paracontracting operators }.\\
(ii): By Lemmas \ref{lem: Doubly stochastic matrix} and \ref{prop:Composition of paracontracting operators }, $\mathrm{fix}(\boldsymbol{T}_Q\boldsymbol{W}_Q  \circ \cdots \circ  \boldsymbol{T}_1 \boldsymbol{W}_1) = \mathrm{fix}(\boldsymbol{T}_Q) \cap \cdots \cap \mathrm{fix}(\boldsymbol{T}_1) \cap \mathrm{fix}(\boldsymbol{W}_Q) \cap \cdots \cap \mathrm{fix}(\boldsymbol{W}_1)$. Again, by Lemmas \ref{lem: Doubly stochastic matrix}, \ref{prop:Composition of paracontracting operators } $\bigcap_{r=1}^Q \mathrm{fix}(\boldsymbol{W}_r) = \mathrm{fix}(\boldsymbol{W}_Q \cdots \boldsymbol{W}_1)$ and since the composition $\boldsymbol{W}_Q \cdots \boldsymbol{W}_1$ is strongly connected, by the Perron-Frobenius theorem, $\mathrm{fix}(\boldsymbol{W}_Q \cdots \boldsymbol{W}_1) = \mathcal{A}$. Furthermore, as $\bigcap_{r=1}^Q \mathrm{fix}(\boldsymbol{T}_r) = C$, $\mathrm{fix}(\boldsymbol{T}_Q\boldsymbol{W}_Q  \circ \cdots \circ  \boldsymbol{T}_1 \boldsymbol{W}_1) = \mathcal{A} \cap C$. 
\end{proof}
With these results, we are now ready to prove Theorem \ref{theorem: main}.\\
\begin{proof}(Theorem \ref{theorem: main})
Let us define the sub-sequence of $\boldsymbol{x}^{k} \text{ for all } k \in \mathbb{N}$ as $\boldsymbol{z}^t = \boldsymbol{x}^{(t-1)Q}$ for each $t \geq 2 $ with $Q$ being the integer in Assumptions \ref{asm: Q-con } and \ref{asm: Q admissible bargaining}. Then,

\begin{equation}\label{eq: z^{k} subsequence}
    \boldsymbol{z}^{t+1} = \boldsymbol{T}^{tQ - 1}\boldsymbol{W}^{tQ - 1}  \circ \cdots \circ  \boldsymbol{T}^{(t-1)Q} \boldsymbol{W}^{(t-1)Q} \boldsymbol{z}^t
\end{equation}
for $t \geq 2$. It follows from Lemma \ref{prop:stacked paracontraction operator} and assertion 1 of Lemma \ref{lemma: fix points} that the maps $\boldsymbol{x} \longmapsto (\boldsymbol{T}^{tQ - 1}\boldsymbol{W}^{tQ - 1}  \circ \cdots \circ  \boldsymbol{T}^{(t-1)Q} \boldsymbol{W}^{(t-1)Q})(\boldsymbol{x}),$ $ t \geq 2 $ are all paracontractions. Also, under Assumption \ref{asm: fixed graph}, there can be only finitely many such maps. Furthermore, by assertion 2 of Lemma \ref{lemma: fix points}, the set of fixed-points of each map is
$\mathcal{X}^*$. Thus, by Lemma \ref{lemma: finite family}, the iteration in (\ref{eq: z^{k} subsequence}) converges to some $ \bar{\boldsymbol{z}} \in \mathcal{X}^*$. 
\end{proof}

\bibliographystyle{IEEEtran}
\bibliography{bibliography.bib}

\end{document}